\theoremstyle{plain}
\newtheorem{theorem}{Theorem}[section]
\newtheorem{proposition}[theorem]{Proposition}
\newtheorem{lemma}[theorem]{Lemma}
\newtheorem{remark}[theorem]{Remark}
\newtheorem{definition}[theorem]{Definition}
\newtheorem{example}[theorem]{Example}
\newtheorem{assumption}[theorem]{Assumption}
\theoremstyle{nonumberplain}
\newenvironment{proof}[1][]
{\ifthenelse{\equal{#1}{}}{\smallskip\noindent\textsl{Proof. }}{\smallskip
\noindent\textsl{Proof #1. }}}{\hfill$\Box$}
\def\Bc{{\cal B}}
\def \a{\alpha}
\def \d{\delta}
\def \l{\lambda}
\def \o{\omega}
\def \O{\Omega}
\def \e{\varepsilon}
\begin{document}

\title{Convergence of utility indifference prices to the superreplication price in a multiple-priors framework}

\author {
{Romain} {Blanchard}, E.mail~:  romblanch@hotmail.com.\\
\and
{Laurence} {Carassus}, E.mail~: laurence.carassus@devinci.fr \\
L\'{e}onard de Vinci P\^ole Universitaire, Research Center, 92 916 Paris La D\'{e}fense, France \\
and LMR, UMR 9008, Universit\'e Reims Champagne-Ardenne.\\
}

\date{}

\maketitle

\begin{abstract}
This paper formulates a  utility indifference pricing model for investors trading in a discrete time financial market under non-dominated model uncertainty.  Investor preferences are described by  {possibly random utility functions} defined on the positive axis. We prove that  when the investors's absolute risk-aversion tends to infinity, the multiple-priors utility indifference prices of a  contingent claim converge to its  multiple-priors superreplication price.
We also  revisit  the notion of certainty equivalent for multiple-priors  and establish its relation with risk aversion.
\end{abstract}
\textbf{Key words}: {utility indifference price; superreplication price; absolute risk aversion, Knightian uncertainty; multiple-priors;  non-dominated model} \\
\textbf{AMS 2000 subject classification}: {Primary 91B70, 91B16, 91G20 ; Secondary
 91G10, 91B30, 28B20}\\
\textbf{JEL  classification}: {C61, D81, G11, G13}

\section{Introduction}
In this paper, we  examine different definitions of prices for a contingent claim and their relation in the context of uncertainty.
Risk and uncertainty are at the heart of economic life  and modeling the way an agent will react to them  is a central thematic of  economic research (see for instance \citep{Gilboa}). Knightian uncertainty (see \citep{Kni}) means that the agent is not  certain about the choice of a given prior  modeling the outcome of a  situation. It is a kind of  ``unknown unknown\rq\rq{} in opposition to the risk where the agent is confident about her prior and only faces the randomness of the outcome, which is somehow the ``known unknown\rq{}\rq{}.  Issues related to uncertainty arise in various concrete situations in social sciences and economics, such as policy-making.  They  also affect  many aspects of modern finance such as  model risk   when  pricing and risk-managing  complex derivatives products  or  capital requirement quantification when looking at regulation for banks  and others financial entities.  As illustrated through the Ellsberg Paradox (see \citep{El61}), when facing uncertainty an agent  displays uncertainty aversion:  She tends to prefer a situation where the uncertainty is reduced.  This is the pendant of the risk aversion when the agent faces only risk.  It is well known that if one wants to represent the preferences of the agent in this context, the  axiomatic of the von Neumann and Morgenstern expected utility criterium   (see \citep{vNM}) is not verified. The Savage's extension (see \citep{Sav}), where subjective probability measures depending on each agent are introduced, does not solve this issue. Thus, in this paper, we follow the pioneering approach introduced by  \citep{Gilb}  where under  suitable axiomatic on the investor preferences,  the form of the utility functional is  a worst case expected utility: $\inf_{P \in \mathcal{Q}} E_{P} U( X)$,
where $\mathcal{Q}$ is the set of all  possible probability measures representing  the agent's  beliefs on the market model.
Somehow the larger  $\mathcal{Q}$,  the less confident the agent is in a specific model and the more she wishes to take into account as many scenarii as possible.  For example, the set $\mathcal{Q}$ may be constructed starting from a given  underlying model where  all the parameters are  not available but some of them may be
 inferred or estimated  from observable prices.  The agent  might also want to add her own belief or view on the ``correct"  value  of these parameters.
Such a worst case expected utility representation can also be used for robustness considerations when  $\mathcal{Q}$ is a set of models  resulting from  small perturbations of an initial reference model. This is  related for instance to the work of  \citep{Han01} where a term corresponding to the  relative entropy given a certain reference probability measure  is added to the utility functional.  The framework  of  \citep{Gilb}  was   extended by \citep{Macc06} who introduced a penalty term  to the utility functional. 
Finally, \citep{Vio11} represent  the preferences by a more general functional  $\inf_{P \in \mathcal{Q}} g(E_{P} U(X),P)$ where $g$ is a so-called uncertainty index reflecting the decision-maker's attitudes towards uncertainty.

An important feature is to allow the set of probability measures $\mathcal{Q}$  to be non-dominated. This means that no probability measure  determines the set of  events that can happen or not. The relevance of this idea is illustrated by the concrete  example of an underlying market model with volatility uncertainty, see \citep{AP95}, \citep{Ly95} and  \citep{EJ13}. {For a simple binomial model where the up and down multiples belong to intervals, the set of priors is non-dominated as soon as for one scenario, one of the interval is not trivial (see \cite{BC19}). However considering non-dominated models increases significantly the mathematical difficulties as  some of the classical tools of probability theory such as conditional expectation or essential supremum are ill-suited to this framework (since they are defined with respect to a given probability measure). These type of  issues have  contributed  to the development of innovative mathematical tools such as quasi-sure stochastic analysis, non-linear expectations, G-Brownian motions. On these topics,  we refer among  others to  \citep{pg11},  \citep{SoToZa11} or \citep{Coh12}.

In quantitative finance, the No-Arbitrage (NA) notion is  central to many problems and crucial when it comes to pricing questions. It asserts that starting from  zero wealth it is not possible to reach a positive one (i.e. non negative almost surely and strictly positive with a strictly positive probability). The characterisation of this condition or of the No Free Lunch condition is called the Fundamental Theorem of Asset Pricing (FTAP in short) and makes the link between  these notions  and the existence of  equivalent risk-neutral probability measures (also called martingale measures or pricing measures) which are equivalent probability measures that turn the (discounted) asset price process into a martingale. This was initially formalised in \citep{Hakr79}, \citep{HaPL81} and \citep{Kr81} while  \citep{dmw}  obtained the FTAP  in a general discrete-time setting under the NA condition. The literature on the subject is huge  and we refer to \citep{DelSch05} for a general overview.

The martingale measure is used  to price contingent claims. However, in  incomplete markets  i.e. when not all  contingent claims can be perfectly replicated by dynamic trading, the risk-neutral probability measure is not unique and this leads  to different possible evaluations for a given claim. The superreplication price is the minimum amount  needed for an agent selling a claim in order to superreplicate  it by trading in the market. This is the hedging price with no risk and to the best of our knowledge it was first introduced in \citep{Ben91} in the context of transaction costs. In complete markets the superreplication cost  is just the cash flow expectation computed under the unique martingale measure. But in incomplete markets,  the superreplication cost  is equal to the supremum of those expectations computed under the different risk-neutral probability measures. This is  the so called dual formulation of the superreplication price or  Superhedging Theorem (see for instance \citep{ElQu91} or \citep{CvKa92}). 

Naturally, all these concepts have seen a renewed interest in the context of uncertainty, see  among  others \citep{Ho982}, \citep{DaHo07},   \citep{Rie11}, \citep{CoOb11}, \citep{CoOb112}, \citep{ABPW13}, \citep{BeiHLPen13}, \citep {DolSon14}, \citep{BN},  \citep{Vio15}, \citep {Dol16}, \citep{ClassS} and \citep{Cher17}. 

One may wonder if the superreplication price is not too high to be used practically in  financial markets and if it should only be seen  as an  upper-bound for pricing issues.
On one hand, in \citep{CV17} it is proved that when the support of the
conditional law of the risky asset is bounded, the superreplication price of some
convex option is equal to the
replication price in a binomial model (see \citep{CRR79}) whose parameters are
the law support boundaries.
So, if this support is not too large, the superreplication price can be of practical use.
This type of result has been generalized in the robust setting by \citep{COW}: The mutilple-priors superreplication price corresponds to the
 uni-prior superreplication price for an extreme prior in $\mathcal{Q}$. \\
On the other hand,  the supperreplication price is sometimes too onerous:  For example  the  superreplication price of a call option may be equal to the underlying initial price  in a stochastic volatility model (see \citep{CvPT99}).
Thus it may be interesting  to consider another concept of pricing,
especially since the superreplication price does not take into account the preferences of the agents.

The reservation price (or utility indifference price)  is an alternative  approach to pricing contingent claims. In the context of quantitative finance  it was first introduced in \citep{H89} in the presence  of transaction costs. This is the minimum amount of money to be paid to an agent selling a contingent claim  such that, added to her initial capital,  her  utility when selling and hedging it by trading dynamically in the market is greater than or equal  to the one she would get without selling the product. Importantly, this notion of price allows to take into account the preferences of the agent and  allows for some  risk-seeking behavior while the superreplication price corresponds to a totally risk averse agent. Hence, the reservation price should provide a cheaper alternative to the suppereplication price.  But can it be used in practice? Consider  the case of illiquidity and basis  risk:  Options are sometimes written on  illiquid underlying assets   where a liquid market  exists in some closely related asset (as for example in commodity markets or for real options). To find an appropriate price and the best hedging strategy using only tradable assets, a widely used approach is the reservation price  (see \citep{VH02} and the reference therein). In the case of the exponential utility functions the price can be computed in a reasonably explicit form using convex duality. Moreover \citep{MM04} shows that the optimal strategy based on exponential utility maximization gives a superior hedging performance than a naive Black-Scholes strategy that assumes that the traded asset is a good proxy for the non-traded asset. However the reservation price for exponential utility functions is wealth independent, which is quite unrealistic since agents with different endowments will not have the same attitude towards risk. This is a good argument for considering other utility functions. In these cases, the reservation price is difficult to compute but it is still possible to obtain power series expansions (see \citep{MM04}).  

In this paper, we fill a gap in the literature introducing the reservation price  in the multiple-priors set-up and studying its links with the (multiple-priors) superreplication price. Our convergence result asserts that even in a multiple-priors set-up when the absolute risk aversion increases the preferences of the agent is less and less relevant for pricing issues: The preference based prices of the agent converge to the preference risk free one. Proving this, we extend an important literature  starting with   \citep{ElKRou00}  for exponential utility functions and a  Brownian
model. Then,  \citep{sixauthor} extended the result to a general semimartingale setting while a nonexponential case
was treated in \citep{bouchard-these}, but with severe restrictions on the utility functions.
The case of general utility functions was considered in \citep{CR06} and \citep{CR07b}
in discrete-time market models and in \citep{CR11} for continuous time ones.

We have chosen to  work in discrete time and to consider  utility functions defined on  the half real line rather than the whole real line. We believe that this is actually  relevant in practice as it corresponds to  situations where the agent is totally averse to bankruptcy.
We  also consider random utility functions to allow for state dependent absolute risk aversion or  random reference point.

To the best of our knowledge Theorems \ref{t2}, \ref{t2bis} 
(for non random utility functions) and \ref{t1} (for random utility functions) are the first general asymptotic results in the multiple-priors framework. We treat the case of general concave utility functions in a regular enough market (see Assumption \ref{SalphaI} and Theorem \ref{t2}) and  the case of general markets for sequence of functions which are possibly non concave but  bounded from above uniformly in $n$ (see Theorem \ref{t2bis}). Note that simultaneously \citep{Bart16} obtains some  convergence result for yet another  utility based price and agents with constant absolute risk aversion.

Even if the paper follows a long line of research, one could question  the theoretical (and practical) value of our asymptotic result. First, it proves that the superreplication price is a kind of universal price even when taking into account the preferences of the agent in the case of high absolute risk aversion. Indeed, empirical evidence related to the risk premium puzzle (see \citep{MP85}) has shown that the  risk aversion of an agent can be very high. Thus, in theses cases the superreplication price is a good approximation, even in the multiple-priors case. As already mentioned,    in some quite general cases (if the set $\mathcal{Q}$ is not too wide) it can be computed as in the uni-prior case and  used in practice. Outside these cases (very high risk aversion and ``reasonable\rq\rq{} superreplication price), taking into account the preferences of the agent is a good way to obtain a lower price (see the basis risk example and also Section \ref{example}).  

 We also revisit in a static context the notion of certainty equivalent introduced in \citep{Pr65}. We extend it  in the presence of multiple-priors and give some conditions for existence and uniqueness of certainty equivalent (see Proposition \ref{certainequivdef}). We  establish that the absolute risk aversion  allows the ranking of the multiple-priors  certainty equivalent despite  the presence of uncertainty aversion (see Proposition \ref{rrandce}). This part is  related to \citep{Bau13}  where  an alternative notion of (static) indifference  prices are introduced for non-random utility functions  under the representation of   \citep{Vio11}.


{Finally we present a detailed example where all the concepts and results of the paper are illustrated.}

We have chosen to  work under the discrete-time framework introduced in \citep{BN}.
We outline briefly in Section  \ref{setup} some of the interesting features of this framework, in particular with respect to time-consistency. To solve our problem, we use some arguments of {\citep{CR06} that are adapted to our multiple-priors  framework together with \citep[Theorems 2.2 and 2.3]{BN}. We also use  some  elements of quasi-sure stochastic analysis as developed in \citep{DM06} and \citep{DHP11}.

The article is structured as follows:  Section \ref{setup} presents  the framework and the definitions  needed in the rest of the paper. Section \ref{secmain}  presents the main theorem on the convergence of the utility indifference prices to the superreplication price for non-random utility function. This section also  revisits the link between certainty equivalent and absolute risk aversion in our set-up. Section \ref{example} proposes a detailed example illustrating our results.
The proofs  are reported in Section \ref{secproof}.
Finally,  in Section \ref{apen}   the  convergence result is extended to random utility functions.
\section{The model}
\label{setup}
This section presents our multiple-priors  framework.
\subsection{Framework overview}
\label{ts}
We fix  a time horizon $T\in \mathbb{N}$ and  introduce  a sequence $\left(\Omega_t\right)_{1 \leq t \leq T}$  of Polish spaces. Each $\O_{t+1}$ contains all possible scenarii between time $t$ and $t+1$. For some $1 \leq t \leq T$, let $\Omega^{t}:=\O_{1} \times \dots \times \O_{t}$  (with the convention that $\Omega^{0}$ is reduced to a singleton). We denote by  $\mathcal{B}(\O^{t})$  its Borel sigma-algebra and by $\mathfrak{P}(\O^{t})$ the set of all probability measures on $(\O^{t},\mathcal{B} (\O^{t}))$.  An element of $\Omega^{t}$ will be denoted by $\o^{t}=(\omega_{1},\dots, \omega_{t})$ for $(\o_{1},\dots,\o_{t}) \in \Omega_{1}\times\dots\times\Omega_{t}$. We also introduce the universal sigma-algebra $\mathcal{B}_{c}(\O^{t})$ which is the intersection of all possible completions of $\mathcal{B}(\O^{t})$.
A function $f: \O^{t} \to Y$ (where $Y$ is an other Polish space) is universally-measurable  (resp. Borel-measurable) if for all $B \in \mathcal{B}(Y)$ (the Borel sigma-algebra on $Y$), $f^{-1}(B) \in  \mathcal{B}_{c}(\O^{t})$ (resp. $f^{-1}(B) \in  \mathcal{B}(\O^{t})$).  Similarly we will speak of universally-adapted or universally-predictable (resp. Borel-adapted or Borel-predictable) processes.
\subsubsection{Uncertainty modelisation}
Uncertainty is modeled as in  \citep{BN} (see also  \citep{BC16}). We refer to these papers for a thorough technical presentation of the framework. 
For each $0\leq t\leq T-1$ we consider some   random set  $\mathcal{Q}_{t+1} : \Omega^t \twoheadrightarrow \mathfrak{P}(\O_{t+1})$, where $\mathcal{Q}_{t+1}(\o^{t})$ can be understood as  the set of all possible models, from the agent perspective, for the $t+1$-th period if the scenario $\o^{t}$ occurs until time $t$. Those sets  are the blocks  from which the set $\mathcal{Q}^{t}$ of all priors  on $\O^t$ is built.
\begin{align}
\label{Qstar}
 \mathcal{Q}^{t}:=\{ Q_{1}\otimes q_{2} \otimes \dots \otimes q_{t},\;  Q_{1} \in \mathcal{Q}_{1},
  q_{s+1}(\cdot,\o^s) \in \mathcal{Q}_{t+1}(\o^s),\; \forall \o^{s} \in \O^{s},\;  s \in \{1,\dots, t-1\}  \},
\end{align}
where for all $1\leq s \leq T-1$, $ q_{s+1}(\cdot,\o^s)$ is a universally-measurable stochastic kernel on $\O_{s+1}$ given $\o^s \in \O^{s}$ (see \citep[Definition 7.12 p134]{BS}), and where  the notation $Q_s:=Q_{1}\otimes q_{2} \otimes \dots \otimes q_{s}$ stands for the probability measure  resulting from the composition using Fubini's Theorem: For all $A \in \mathcal{B}_{c}(\O^{s})$
$$Q_s(A)=\int_{\Omega_{1}} \cdots  \int_{\Omega_{s}} 1_{A}(\omega_{1},\cdots, \omega_{s}) q_{s}(d\omega_s,(\omega_{s-1},\cdots \o_{1})) \cdots q_{2}(d\o_{2},\o_{1}) Q_{1}(d\omega_1).$$
The set $\mathcal{Q}^{T}$   governs the market until time $T$ and  determines which events are relevant or not for the agents.
Note that to  make this construction mathematically rigorous,  
a measurable selection theorem is applied in order to pick up some universally-measurable $q_{s+1}(\cdot,\o^s) \in \mathcal{Q}_{t+1}(\o^s)$.
To do that we rely on the following technical assumption which is now classical in the recent literature on multiple-priors models.
 \begin{assumption}
\label{Qanalytic}
For all $0\leq t\leq T-1$,  $\mathcal{Q}_{t+1}$ is a non-empty and convex valued random set such that
$$\mbox{Graph}(\mathcal{Q}_{t+1}):=\left\{(\omega^{t},P) \in \Omega^{t}\times \mathfrak{P}(\Omega_{t+1}),\; P \in \mathcal{Q}_{t+1}(\omega^{t})\right\}$$ is an analytic set. \footnote{Recall that an analytic set is the continuous image of some Polish space, see  \citep[Theorem 12.24 p447]{Hitch}, and also \citep[Chapter 7]{BS}  for more details on analytic sets.}
\end{assumption}

Apart from Assumption \ref{Qanalytic}, no specific assumption on the set of priors is made:  $\mathcal{Q}^{T}$  is neither assumed to be  dominated by a given reference probability measure nor to be weakly compact.
This setting allows for various general definitions of the  sets $\mathcal{Q}^T$.   We propose below some examples and  refer to \citep{Bart16} and \citep{BC19} for other examples.
\begin{example}
\label{exmod}
 Our model includes the (non-dominated) case where the physical measure is not known a priori but is rather a result of collecting data and estimation, so that some kind of
``neighborhood'' is added to the estimator $P^*=P^*_{1} \otimes p^{*}_{2} \otimes \cdots \otimes p^*_{T}$ of the physical measure
$$\mathcal{Q}_{t+1}(\omega^{t})=\left\{P \in \mathfrak{P}(\O_{t+1}), \, \mbox{dist}(P,p^*_t(\o^t)) \leq \e_t(\o^t) \right\},$$
where dist is  a distance between distributions (a popular choice is the Wasserstein  distance). Then if $\e_t$, $p_t^*$ and dist are Borel-measurable, $\mbox{Graph}(\mathcal{Q}_{t+1}) $ is  an   analytic set (see
\citep[Example 2.3]{Bart16}).
Another interesting non-dominated case  is when the increments of the price process are bounded. Working on the canonical space of a one-dimensional stock $\Omega= \mathbb{R}^T$, $S_t(\o^t)=\o_t$,  the set of possible priors is given by
\begin{align*}
\mathcal{Q}_{t+1}(\omega^{t}) = & \left\{P \in \mathfrak{P}(\Bc(\mathbb{R})), \, \mbox{supp}(P) \subset [\o_t d_{t+1},\o_t  u_{t+1}]  \right\},
\end{align*}
where $\mbox{supp}(P)$ is the support of the measure $P$. Then  Assumption \ref{Qanalytic} is satisfied  (see \citep{CV17}). \end{example}

\subsubsection{Time-consistency and related comments}
The fact that our sets of probability measures are uniquely determined by the set of one-step-ahead probability measures is related to the notion of time-consistency on which we  focus  briefly  now.  Roughly speaking, time-consistency means that a decision taken tomorrow will satisfy today's objective. 
 Recall that this issue appears already in a uni-prior setting, in the study of dynamic risk measures for instance, and  is  linked  to   the law of iterated conditional expectations and the dynamic programming principle.  We refer to the surveys \citep{AcPe11} and \citep{Bie16}}  for  detailed overviews.
When introducing multiple-priors  one has to be even more careful with time-consistency. In \citep[Appendix D]{Rie09}  a simple example  illustrates what can happen  if  one is not cautious  on the structure of the initial set of priors: One cannot hope to find an optimal solution using the  dynamic programming principle. 
To deal with this, one has to assume that the set of priors is stable under pasting which  roughly means  that  different priors  can be mixed  together (see  \citep[Assumption 4]{Rie09}). It is clear that the set of priors $\mathcal{Q}^{T}$ is stable under pasting. Indeed, given  \eqref{Qstar}, if $Q^1,Q^2 \in \mathcal{Q}^{T}$ with $Q^1=Q^1_{1} \otimes q^{1}_{2} \otimes \cdots \otimes q^1_{T}$, $Q^2=Q^2_{1} \otimes q^{2}_{2} \otimes  \cdots \otimes q^2_{T}$, then $R:=Q^1_{1} \otimes q^{1}_{2} \cdots \otimes q^1_{t-1} \otimes q^2_{t} \otimes \cdots \otimes q^2_{T} \in \mathcal{Q}^{T}$ for all $2 \leq t \leq T-1$. In a sense, the set $\mathcal{Q}^{T}$ is large enough (unlike in the example  considered in  \citep[Appendix D]{Rie09}). In \citep [Definition 3.1]{ES03} the equivalent notion of rectangularity is  introduced (see also  \citep[Sections 3, 4]{ES03} for more details and a graphical interpretation). \\

\subsubsection{{The traded assets and the trading strategies}}
Let $S:=\left\{S_{t},\ 0\leq t\leq T\right\}$ be a universally-adapted $d$-dimensional  process
where for $0\leq t\leq T$, $S_{t}=\left(S^i_t\right)_{1 \leq i \leq d}$ represents the  price of $d$ risky securities in the
financial market in consideration. To solve measurability issues  the following assumption already present in \citep{BN} is made.
\begin{assumption}
\label{Sass}
The price process $S$ is Borel-adapted.
 \end{assumption}
 Trading
strategies are given by universally-adapted $d$-dimensional  processes $\phi:=\{ \phi_{t}, 1 \leq t \leq T\}$ where for all $1 \leq t \leq T$, $\phi_{t}=\left(\phi^{i}_{t}\right)_{1 \leq i \leq d}$ represents the
investor's holdings in  each of the $d$ assets at time $t$. The set of trading strategies is denoted by $\Phi$.\\
The trading is assumed to be self-financed and  the riskless asset's price constant equal to $1$. The value at time $t$ of a portfolio $\phi$ starting from
initial capital $x\in\mathbb{R}$ is then given by
$$
V^{x,\phi}_t=x+\sum_{s=1}^t  \phi_s \Delta S_s.
$$

\subsection{{Multiple-priors no-arbitrage condition}}
\label{BNexp}
 As already eluded to in the introduction, the issue of no-arbitrage in the context of uncertainty has seen a renewed interest. In this paper we follow the definition introduced by \citep{BN}. 
\begin{assumption}
\label{NAQT}
The $NA(\mathcal{Q}^{T})$ condition holds true if 
$V_{T}^{0,\phi} \geq 0 \; \mathcal{Q}^{T}\mbox{-q.s.}$ for some $\phi  \in \Phi$ implies that $V_{T}^{0,\phi}  = 0 \;\mathcal{Q}^{T}\mbox{-q.s. }$
\end{assumption}
A set $N \subset X$ is a $\mathcal{Q}^{T}$-polar set  if for all $P \in \mathcal{Q}^{T}$, there exists some $A_{P} \in \mathcal{B}(\O^{T})$ such that $P(A_{P})=0$ and $N \subset A_{P}$. A property holds true $\mathcal{Q}^{T}$-quasi-surely (q.s.), if it is true outside a $\mathcal{Q}^{T}$-polar set. Finally  a set is of $\mathcal{Q}^{T}$-full measure  if its complement is a $\mathcal{Q}^{T}$-polar set.

We outline briefly some of the interesting features of this definition. First it is  a  natural and intuitive extension of the classical uni-prior no-arbitrage condition. This argument  is strengthened by the FTAP generalisation proved by \citep{BN}. Under Assumptions \ref{Qanalytic} and \ref{Sass}, the $NA(\mathcal{Q}^{T})$ is equivalent to the following: For all $Q \in \mathcal{Q}^{T}$, there exists some $P \in \mathcal{R}^{T}$ such that $ Q \ll P$ where \begin{align}
 \label{mathR}
 \mathcal{R}^{T}:=\{P \in \mathfrak{P}(\O^{T}),\; \exists \, Q^{'} \in \mathcal{Q}^{T}, P \ll Q^{'} \; \mbox{and $P$ is a martingale measure}\}.
 \end{align} The classical notion of equivalent martingale measures is replaced by the fact that for all priors $Q \in \mathcal{Q}^{T}$, there exists a martingale measure $P$ such that $Q$ is absolutely continuous with respect to $P$ and one can find an other prior  $Q' \in \mathcal{Q}^{T}$ such that $P$ is absolutely continuous with respect to $Q'$.   The extension in the same multiple-priors  setting of  the Superhedging Theorem   and subsequent results on  worst-case expected utility maximisation (see for example \citep{Nutz}) 
is an other convincing element.

We present now an alternative characterisation of the $NA(\mathcal{Q}^{T})$ condition which was proved in \citep[Proposition 2.3]{BC16}.
\begin{proposition}
\label{alpha}
Assume that the $NA(\mathcal{Q}^{T})$ condition and  Assumptions \ref{Qanalytic}, \ref{Sass} hold true. Then for all $0\leq t\leq T-1$, there exists some $\mathcal{Q}^{t}$-full measure set $\Omega^{t}_{NA} \in \mathcal{B}_{c}(\O^{t})$   such that for all $\omega^{t} \in \Omega^{t}_{NA}$,
there exists $\alpha_{t}(\omega^{t})>0$ such that for all  $h \in  D^{t+1}(\omega^{t})$  there exists $P_{h} \in \mathcal{Q}_{t+1}(\o^{t})$ satisfying
\begin{eqnarray}
\label{valaki}
P_{h}\left(\frac{h}{|h|}\Delta S_{t+1}(\omega^{t},.)<-\alpha_{t}(\omega^{t})\right)> \alpha_{t}(\omega^{t}),
\end{eqnarray}
where $D^{t+1}(\omega^{t})$ is the affine hull of the multiple-priors  conditional support of the price increments
\begin{small}
\begin{eqnarray*}
{D}^{t+1}(\o^{t})&:= \mbox{Aff} \left(\bigcap  \left\{ A \subset \mathbb{R}^{d},\; \mbox{closed}, \; P_{t+1}\left(\Delta S_{t+1}(\o^{t},.) \in A\right)=1, \; \forall \,P_{t+1} \in \mathcal{Q}_{t+1}(\o^{t}) \right\} \right).
\end{eqnarray*}
\end{small}
Note that ${D}^{t+1}(\o^{t})$ is a vector space for all $\omega^{t} \in \Omega^{t}_{NA}$.
\end{proposition}
In the case where there is only one risky asset and one period,  the interpretation of \eqref{valaki} is straightforward. It simply means  that there exists a prior (i.e. some probability measure $P^{+}$) for which the price of the risky asset increases enough and an other one ($P^{-}$) for which the price decreases, i.e. $P^{\pm} \left( \mp \Delta S(\cdot)<-\alpha\right)> \alpha$ where $\alpha>0$. The number $\a$ serves as a measure of the gain/loss and of their size. In the general case  there is always a prior in which an agent is exposed to a potential loss when buying or selling some quantity of risky assets. Note that in \citep{BC19}, the equivalence between Assumption \ref{NAQT} and condition \eqref{valaki} is established.

We present now the measurability assumption needed for our convergence results (see Theorems \ref{t2} and \ref{t1} below) when the sequence of utility functions is not bounded from above (uniformly in $n$). First, we  introduce the following spaces that will be used throughout the paper
 \begin{align*}
 \mathcal{W}^{0}_{t} & :=    \left \{ X: \Omega^{t} \to \mathbb{R}\cup \{\pm \infty\} \; \mbox{$\mathcal{B}_{c}(\O^{t})$-measurable}\right\}, \\
 \mathcal{W}^{r}_{t} & :=  \left \{ X \in \mathcal{W}^{0}_{t}, \; \sup_{P \in \mathcal{Q}^{t}}E_{P} |X|^{r} <\infty \right\} \mbox{ and }
\mathcal{W}^{\infty}_{t}  := \left \{ X\in \mathcal{W}^{0}_{t}, \;\exists M \, \geq 0, \; |X| \leq M  \; \mathcal{Q}^{t}\mbox{-q.s.}\right\},
\end{align*}
recall  \eqref{Qstar} for the definition of  $\mathcal{Q}^{t}$. We also consider  $ \mathcal{W}^{0,bo}_{T}$ the set of contingent claims   bounded from below  $\mathcal{Q}^{T}$-q.s. i.e. 
$G \in  \mathcal{W}^{0,bo}_{T}$ if and only if $G \in  \mathcal{W}^{0}_{T}$  and  there exists some constant $b \geq 0$ such that $G \geq -b$ $ \mathcal{Q}^{T}$-q.s.

Note that a superscript $^{+}$ will be added for non-negative elements (it will be also used for denoting positive parts).
\begin{assumption}
\label{SalphaI}
We have that $\Delta S_{t}, \frac{1}{\alpha_{t}} \in \mathcal{W}^{r}_{t}$ for all $1 \leq t \leq T$ and $0 < r <\infty$.
\end{assumption}
In  light of Proposition \ref{alpha}, the condition $\frac{1}{\alpha_{t}} \in \mathcal{W}^{r}_{t}$
is a  kind of strong form of no-arbitrage. Note that if $\alpha_t$  is not constant,
then even in the uni-prior case  the utility maximisation problem may be ill posed (see Example 3.3 in \citep{cr}). Hence
our integrability assumption on $\frac{1}{\alpha_{t}}$ looks reasonable.  Some concrete examples where $\a_t$ is computed and Assumption \ref{SalphaI} is verified are presented in \citep{BC19}.

Assumption \ref{SalphaI} could
be weakened to the existence of the $\mathcal{W}_{t}^{N}$-th moment for $N$ large
enough but this would lead to complicated book-keeping with no essential
gain in generality, which we prefer to avoid.

Note finally that as in \citep[Propositions 14 and 15]{DHP11} one can prove   that for all $r \in [1,\infty]$,  $\mathcal{W}^{r}_{t}$ is a  Banach space (up to the usual quotient identifying two random variables that are $\mathcal{Q}^{t}$-q.s. equal) for the norm $||\cdot||_{r,t}$ where
\begin{eqnarray*}
||X||_{r,t}:=\left(\sup_{P \in \mathcal{Q}^{t}}E_{P} |X|^{r}\right)^{\frac{1}{r}}\mbox{ if $r<\infty$  and }||X||_{\infty,t}:=\inf \{ M \geq 0, X(\cdot) \leq M \; \mathcal{Q}^{t} \mbox{-q.s.}\}.
\end{eqnarray*}
We will omit the index $t$ when $t=T$.
\section{Main results}
\label{secmain}
This section contains our main results as well as  the definitions  of the superreplication  and of the  (seller) utility  indifference prices.

\subsection{Multiple-priors superreplication price}
The multiple-priors  superreplication price (a seller price) is the minimum initial amount that  an agent will ask for  delivering some contingent claim $G \in \mathcal{W}_{T}^{0}$
so that she is fully hedged at $T$ when trading in the market.
The set of strategies which dominate  $G$  $\mathcal{Q}^{T}$-q.s. starting from a given wealth $x \in \mathbb{R}$ is defined by
\begin{align}
\label{aah} \mathcal{A}(G,x):=\left\{ \phi \in \Phi,\; V_{T}^{x,\phi} \geq G \; \mathcal{Q}^{T}\mbox{q.s.} \right\}.
\end{align}
 \begin{definition}
\label{SuperDef}
Let $G \in \mathcal{W}_{T}^{0}$. The multiple-priors  superreplication  price of $G$ is defined by
\begin{eqnarray}
\label{piG}
 \pi(G):=\inf \left\{z \in \mathbb{R}, \; \mathcal{A}(G,z) \neq \emptyset \right\}
\end{eqnarray}
and $\pi(G)=+\infty$ if $\mathcal{A}(G,z) = \emptyset$ for all $z \in \mathbb{R}$.
\end{definition}
Note that $\pi(G)=+\infty$ if and only if $\mathcal{A}(G,z) = \emptyset$ for all $z \in \mathbb{R}$.

\begin{remark}
The corresponding buyer price is the multiple-priors  subreplication  price of $G$   defined by 
$\pi^{sub}(G):=\sup \left\{z \in \mathbb{R},  \; \mathcal{A}(-G,-z) \neq \emptyset \right\}$ 
and $\pi^{sub}(G)=-\infty$ if $\mathcal{A}(-G,-z) = \emptyset$ for all $z \in \mathbb{R}$.
It is clear that for $G \in \mathcal{W}_{T}^{0}$, $\pi^{sub}(G)=-\pi(-G)$.
\end{remark}

We recall now for the convenience of the reader  \citep[Theorem 2.3]{BN}.
\begin{theorem}
\label{BN2}
 Assume that Assumptions  \ref{Sass} and \ref{NAQT} hold true and let $G \in \mathcal{W}_{T}^{0}$ be  fixed. Then $\pi(G)>-\infty$ and $\mathcal{A}(G,\pi(G)) \neq \emptyset$.
\end{theorem}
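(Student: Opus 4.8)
\emph{Proof plan.} The plan is to reduce both assertions to properties of the convex cone
\begin{align*}
\mathcal{C}_{0}^{T}=\{V_{T}^{0,\phi},\ \phi\in\Phi\}-\mathcal{W}_{T}^{0,+}
\end{align*}
of claims super-replicable from zero initial capital, and to lean on a single nontrivial ingredient: the $\mathcal{Q}^{T}$-quasi-sure closedness of $\mathcal{C}_{0}^{T}$, i.e.\ \citep[Theorem 2.2]{BN} — the very result already used in the proof of Lemma \ref{L3}, valid under Assumptions \ref{Sass} and \ref{NAQT} alone. First I would record the elementary equivalence, valid for every $z\in\mathbb{R}$,
\begin{align*}
\mathcal{A}(G,z)\neq\emptyset\iff G-z\in\mathcal{C}_{0}^{T},
\end{align*}
obtained from $V_{T}^{z,\phi}=z+V_{T}^{0,\phi}$ by absorbing the nonnegative slack $V_{T}^{0,\phi}-(G-z)$ into $\mathcal{W}_{T}^{0,+}$. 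Thus $\pi(G)=\inf\{z\in\mathbb{R},\ G-z\in\mathcal{C}_{0}^{T}\}$, the claim $\mathcal{A}(G,\pi(G))\neq\emptyset$ is literally $G-\pi(G)\in\mathcal{C}_{0}^{T}$, and $\pi(G)>-\infty$ says that $\{z,\ G-z\in\mathcal{C}_{0}^{T}\}$ is bounded below. I shall also use that $\mathcal{C}_{0}^{T}$ is a cone (scale the strategy and note $\lambda\mathcal{W}_{T}^{0,+}\subset\mathcal{W}_{T}^{0,+}$ for $\lambda\geq0$), that every element of $\mathcal{C}_{0}^{T}$ is $<+\infty$ $\mathcal{Q}^{T}$-q.s.\ (being dominated by the real-valued $V_{T}^{0,\phi}$), and that $\mathcal{Q}^{T}\neq\emptyset$, so $\Omega^{T}$ is not $\mathcal{Q}^{T}$-polar.

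\emph{The bound $\pi(G)>-\infty$.} If $\mathcal{A}(G,z)=\emptyset$ for every $z$, then $\pi(G)=+\infty$ and there is nothing to prove; so assume $\phi_{0}\in\mathcal{A}(G,z_{0})$ for some $z_{0}\in\mathbb{R}$. Then $G\leq V_{T}^{z_{0},\phi_{0}}<+\infty$ $\mathcal{Q}^{T}$-q.s., and $G>-\infty$ $\mathcal{Q}^{T}$-q.s.\ (a harmless standing assumption on the payoff; note that $G\equiv-\infty$ would genuinely have $\pi(G)=-\infty$). Suppose, for contradiction, that $\pi(G)=-\infty$. Then $\mathcal{A}(G,-n)\neq\emptyset$, i.e.\ $G+n\in\mathcal{C}_{0}^{T}$, for every $n\geq1$. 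As $G+n\to+\infty$ $\mathcal{Q}^{T}$-q.s.\ when $n\to\infty$, the sequence $(G+n)_{n}\subset\mathcal{C}_{0}^{T}$ converges $\mathcal{Q}^{T}$-q.s.\ to an element of $\mathcal{W}_{T}^{0}$ that is $+\infty$ $\mathcal{Q}^{T}$-q.s.; by \citep[Theorem 2.2]{BN} that element lies in $\mathcal{C}_{0}^{T}$, which is impossible since every element of $\mathcal{C}_{0}^{T}$ is $<+\infty$ $\mathcal{Q}^{T}$-q.s.\ and $\Omega^{T}$ is not $\mathcal{Q}^{T}$-polar. (Equivalently one may normalise to $\tfrac{1}{n}G+1\in\mathcal{C}_{0}^{T}$, pass to the q.s.\ limit $1\in\mathcal{C}_{0}^{T}$, and contradict $NA(\mathcal{Q}^{T})$ directly, as $1\in\mathcal{C}_{0}^{T}$ produces $\phi$ with $V_{T}^{0,\phi}\geq1$ $\mathcal{Q}^{T}$-q.s.) Hence $\pi(G)>-\infty$.

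\emph{Attainment.} We may assume $\pi(G)\in\mathbb{R}$ (if $\pi(G)=+\infty$ the statement is vacuous, and $\pi(G)=-\infty$ was just excluded). Choose $z_{n}\downarrow\pi(G)$ with $\mathcal{A}(G,z_{n})\neq\emptyset$, i.e.\ $G-z_{n}\in\mathcal{C}_{0}^{T}$. Then $G-z_{n}\to G-\pi(G)$ pointwise on $\Omega^{T}$ (in $\mathbb{R}\cup\{\pm\infty\}$), in particular $\mathcal{Q}^{T}$-q.s.; applying \citep[Theorem 2.2]{BN} again, $G-\pi(G)\in\mathcal{C}_{0}^{T}$, which is precisely $\mathcal{A}(G,\pi(G))\neq\emptyset$.

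\emph{Main obstacle.} All the content is concentrated in \citep[Theorem 2.2]{BN}. The two places where I invoke it require verifying (i) that the sequences fed in — $(G+n)_{n}$ (resp.\ $(\tfrac{1}{n}G+1)_{n}$) and $(G-z_{n})_{n}$ — genuinely lie in $\mathcal{C}_{0}^{T}$ in the quasi-sure sense, which involves some bookkeeping around the sets $\{G=\pm\infty\}$ and the $\mathcal{W}_{T}^{0,+}$-slack, and (ii) that Theorem 2.2 is applied to possibly unbounded, $\overline{\mathbb{R}}$-valued q.s.-convergent sequences — exactly as it is already used in the proof of Lemma \ref{L3}, so that no new version of it is needed. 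The standing facts $\mathcal{Q}^{T}\neq\emptyset$ and $G>-\infty$ $\mathcal{Q}^{T}$-q.s.\ are what make the contradiction in the second step bite.
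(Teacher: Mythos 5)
The paper itself offers no proof of this statement: it is recalled, slightly adapted, from \citep[Theorem~2.3]{BN}, so there is no internal argument to compare against. Your proposal instead derives it from \citep[Theorem~2.2]{BN} — the quasi-sure closedness of $\mathcal{C}_0^T$ — which is the same single ingredient the paper itself invokes in the proof of Lemma~\ref{L3}, and is the standard route to this result. The reduction $\mathcal{A}(G,z)\neq\emptyset\iff G-z\in\mathcal{C}_0^T$ is correct, and the attainment step (take $z_n\downarrow\pi(G)$, note $G-z_n\to G-\pi(G)$ q.s., apply closedness) is exactly right. For the lower bound, prefer your parenthetical normalised variant: feeding the sequence $(G+n)_n$, whose q.s.\ limit is identically $+\infty$, into the closedness theorem is not covered by its statement, whereas $\tfrac1nG+1\in\mathcal{C}_0^T$ with $\tfrac1nG+1\to1$ q.s.\ (using $G<+\infty$ q.s., which you correctly extract from $\mathcal{A}(G,z_0)\neq\emptyset$) yields $1\in\mathcal{C}_0^T$ and a direct contradiction with $NA(\mathcal{Q}^T)$. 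The one genuine caveat — which you flag yourself — is the standing assumption $G>-\infty$ $\mathcal{Q}^T$-q.s.: it is really needed, since for $G\equiv-\infty$ one has $\mathcal{A}(G,z)\neq\emptyset$ for every $z$ and hence $\pi(G)=-\infty$, so it must be read into $G\in\mathcal{W}_T^0$ as in \citep{BN}, where claims are real-valued. With that reading your argument is complete and matches the intended derivation.
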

\begin{remark}
\label{Theo22}
To apply \citep[Theorem 2.3]{BN}   Assumption \ref{Qanalytic}  is not needed and one may also use  a weaker form of  Assumption \ref{Sass}.
\end{remark}
As usual if the market is complete (i.e. if  any bounded contingent claim is replicable), the superreplication price is equal to the replication price. Indeed, if  $G$ is replicable,  i.e. if there exists some $x_{G} $ and some $\phi_{G} \in \Phi$  such that $G=V_{T}^{x_{G},\phi_{G}}$ $\mathcal{Q}^{T}$-q.s., then $\pi(G)=x_{G}=\pi (V_{T}^{x_{G},\phi_{G}})$. Moreover, under some measurability assumption on $G$, the Superreplication Theorem holds: $\pi(G)=\sup_{P \in \mathcal{R}^{T}} E_{P} G$, see \citep[Superhedging Theorem]{BN} and \eqref{mathR} for the definition of $\mathcal{R}^T$.

We now turn to some pricing rules which take into account the preferences of the agents.

\subsection{Utility function and utility indifference price}
In Section \ref{secmain} we focus on deterministic utility functions defined on the half-real line. An extension to random utility functions is proposed in Section \ref{apen}. In the rest of the section, we consider utility function  $U:  \mathbb{R} \rightarrow \mathbb{R}\cup \{- \infty\}$ such that $U(x)=-\infty$ if $x<0$ and such that there exists some $x_0 \in (0,\infty)$ verifying  $U(x_0)>-\infty$. Considering such functions   corresponds to the concrete situation where the investor is infinitely averse to bankruptcy. Up to a translation one may consider investor with limited credit line.
\begin{assumption}
\label{utilitydefdif}
The restriction of $U$ on $(0,\infty)$  is strictly increasing, twice continuously differentiable and we set $U(0):=\lim_{x\to 0^+} U(x)$.  
\end{assumption}
\begin{definition}
\label{RA}
For any function $U$ satisfying Assumption \ref{utilitydefdif},  the absolute risk  aversion is defined   for all $x \in  (0,+\infty)$ by
\begin{align}
\label{absorisk}
r(x) := &-\frac{U^{''}(x)}{U^{'}(x)}.
\end{align}
\end{definition}
We now turn to pricing issues. First we define some particular sets of strategies for a contingent claim $G \in  \mathcal{W}^{0}_{T}$ and some initial wealth $x \in \mathbb{R} $ (recall \eqref{aah})
\begin{align*}
\Phi(U,G,x):= & \left\{ \phi \in \Phi,\; E_{P} U^{+}(V_{T}^{x,\phi}(\cdot) -G(\cdot))<+\infty,   \forall P \in \mathcal{Q}^{T} \right\}\\
\mathcal{A}(U,G,x) :=&  \Phi(U,G,x) \cap \mathcal{A}(G,x).
\end{align*}
If $\phi \in \Phi(U,G,x)$, the integrals $E_{P} U (V_{T}^{x,\phi}(\cdot)-G(\cdot))$ are well-defined for all $P \in \mathcal{Q}^{T}$ and belong to $[-\infty,\infty)$. However, even for $x  \geq \pi(G)$, $\mathcal{A}(U,G,x)$ might be  empty. Indeed, from  Theorem \ref{BN2} there exists some $\phi \in \mathcal{A}(G,x)$,
but $\phi$ might not belong to $\Phi(U,G,x)$. 
In order to fix these integrability issues,  the following proposition   establishes that $\mathcal{A}(U,G,x)=\mathcal{A}(U,x)$ under suitable assumptions.

\begin{proposition}
\label{pivsp}
Suppose that Assumptions \ref{Qanalytic}, \ref{Sass} and \ref{NAQT} hold true.  
{Suppose that either $U$ is bounded from above or that $U$ is a concave function satisfying Assumption \ref{utilitydefdif} and that Assumption \ref{SalphaI}  holds true}.
Fix some contingent claim  $G \in \mathcal{W}_{T}^{0,bo}$ and some $x \in \mathbb{R}$.
 Then,
$\mathcal{A}(U,G,x)=\mathcal{A}(G,x)$.  If $\mathcal{A}(G,x) \neq \emptyset$, there exists some constant $M_{x} \in [0,\infty)$ such that $-\infty \leq E_{P} U \left( V_{T}^{x,\phi}(\cdot)-G(\cdot)\right) \leq M_{x}$,  for all $P \in \mathcal{Q}^{T}$ and $\phi \in \mathcal{A}( G,x)$.
\end{proposition}
Note that \eqref{aah} and Theorem \ref{BN2} imply that $\mathcal{A}(G,x) \neq \emptyset$ if and only if $x \geq \pi(G)$. \\

\begin{proof}
If $ \mathcal{A}(G,x)= \emptyset$ then  $\mathcal{A}(U,G,x)=\emptyset$. 
Suppose now that $ \mathcal{A}(G,x) \neq \emptyset$ (which is equivalent to $x \geq \pi(G)$). 
 {Assume first that $U$ is a concave function satisfying  Assumption \ref{utilitydefdif} and that  Assumption \ref{SalphaI}  holds true}.  
The proof relies on Lemma \ref{L10}. 
The monotonicity, concavity and differentiability  of $U$ imply that  for all $y \in \mathbb{R}$ and $x_0>0$ 
$U(y) \leq U(\max(y,x_0)) \leq U(x_0)+ \max(y-x_0,0) U'(x_0).$
Thus
\begin{align}
\label{unstar}
U^{+}(y)   \leq U^{+}(x_0)+ |y| U'(x_0).
\end{align}
For any $\phi \in \mathcal{A}(G,x)$ and
 $P \in \mathcal{Q}^{T}$,   using the monotonicity of $U$, the existence of some constant $b \geq 0$ such that $G \geq -b$ $\mathcal{Q}^{T}$-q.s. and Lemma \ref{L10}, we get that
\begin{eqnarray}
\nonumber
E_{P} U^{+} (V_{T}^{x,\phi}(\cdot)-G(\cdot)) & \leq  & E_{P} U^{+} (V_{T}^{x+b,\phi}(\cdot)) \\
\nonumber
& \leq & U^{+}(1) +\sup_{P \in \mathcal{Q}^{T}} E_{P} \left(\left|V_{T}^{x+b,\phi}(\cdot)\right|\right) U'(1)\\
\label{Kx} 
& \leq & U^{+}(1) + (|x|+b)  U'(1) \sup_{P \in \mathcal{Q}^{T}}E_{P}  M_{T}(\cdot)  :=M_x<\infty
\end{eqnarray}
since $M_{T} \in \mathcal{W}^{1}_{T}$ and both $U(1)$ and $U'(1)$ are finite valued. 

 {Assume now that $U$ is bounded from above. Then the last boundness result in \eqref{Kx} is still valid choosing for $M_x$ the upper bound of $U^+$.}
\end{proof}\\

We now introduce the quantity  $u(G,x)$ which represents,  the maximum worst-case expected utility starting from initial capital $x \in \mathbb{R}$ and  delivering $G \in  \mathcal{W}^{0}_{T}$  at $T$
\begin{align}
\label{theeq}
u(G,x):= \sup_{\phi \in \mathcal{A}(U,G,x)} \inf_{P \in \mathcal{Q}^{T}} E_{P} U \left(V_{T}^{x,\phi}(\cdot)-G(\cdot)\right)\end{align}
where  $u(G,x)=-\infty$ if $ \mathcal{A}(U,G,x)=\emptyset$. Note that the expectations in  $u(G,x)$ are well-defined (by definition of $\mathcal{A}(U,G,x)$) and without further assumptions  $u(G,x) \in [-\infty,\infty]$.
\begin{remark}
\label{optimals}
The aim of  this paper is not to find an optimal solution for \eqref{theeq}.  As already mentioned, without further assumption $  \mathcal{A}(U,G,x)$ might be empty and in this case $u(G,x)=-\infty$: The agent will never sell the contingent claims and \eqref{theeq} is not really an interesting problem. If $x \geq \pi(G)$, under the assumptions of Proposition \ref{pivsp},  we can rely on  Theorem \ref{BN2}  to find strategies in  $\mathcal{A}(U,G,x)$ since any super-replication strategy (which exists)  automatically belongs to $\mathcal{A}(U,G,x)$. In this case,  we have that $u(G,x) \leq M_{x}<\infty,$ see \eqref{Kx}.  
Note that if $  \mathcal{A}(U,G,x)$ is not empty, finding  an optimal  solution for \eqref{theeq} requires  to study how the terminal constraint $V_{T}^{x,\phi} -G \geq 0$ $\mathcal{Q}^{T}$-q.s. together with the integrability condition on $E_{P} U^{+} (V_{T}^{x,\phi}(\cdot)-G(\cdot))$  are ``propagated" to the previous periods through dynamic programming. This is a non-trivial problem. This was done for $G=0$ in \citep{BC16} and  could be done in the  general case along the lines of \citep[Remark 5]{RS06}, see also \citep[Theorem 2.3, Lemma 4.10]{BN}.  The problem has been solved  for bounded from above utility function in  \citep[Theorem 4.3, Lemma C.8]{COW}.
\end{remark}

We are now in a position to define  the  (seller) multiple-priors  utility indifference  price or reservation price, which generalizes in the presence of uncertainty the concept  introduced by \citep{H89}. It represents the minimum amount of money to be paid to an agent selling a contingent claim $G$ so that added to her initial capital,  her multiple-priors utility when selling $G$ and hedging it by trading dynamically in the market is greater than or equal to  the one she would get without selling this product.
\begin{definition}
\label{UtIndP}
Let $G \in \mathcal{W}^{0}_{T}$ and $x \in \mathbb{R}$.  The (seller) multiple-priors  utility indifference  price  is  defined by
\begin{align}
\label{prixut}
p(G,x):= \inf \left\{z \in \mathbb{R},\; u(G,x+z) \geq u(0,x)\right\}
\end{align}
where $p(G,x)= +\infty$ if  $u(G,x+z) < u(0,x)$ for all $z \in \mathbb{R}$.
\end{definition}
{
\begin{remark}
 \label{wealth}
In the previous definition $x$ represents the initial wealth of the agent. It could be the accumulated trading gain   and hence much higher than the price of a given contingent claim. However,  nothing prevents $x$ from being smaller than the superreplication price, see Section \ref{example}.   
\end{remark}}
\begin{remark}

The utility indifference price  in \eqref{prixut} is static. Nevertheless  we believe it is important to allow for a dynamic evolution of the price process $S$ and of the uncertainty on the priors between the initial date  and the maturity. Moreover, it could be proved that $p(G,x)$ coincides at $t=0$ with the dynamic version of the utility indifference price. 
\end{remark}
\begin{remark}
One may also introduce the  (buyer) multiple-priors  utility indifference  price $p^{B}(G,x):= \sup \left\{z \in \mathbb{R},\; u(-G,x-z) \geq u(0,x)\right\}$ 
where $p^{B}(G,x)= -\infty$ if $u(-G,x-z) < u(0,x)$ for all $z \in \mathbb{R}$. If $G \in \mathcal{W}^{\infty}_{T}$ then under the assumptions of Proposition \ref{pivsp}, $\mathcal{A}(-G,z)=\mathcal{A}(U,-G,z)$ for all $z \in \mathbb{R}$ and   $p^{B}(G,x)=-p(-G,x)$. 
\end{remark}
The next proposition shows that under suitable assumptions whatever the preferences of the agent are, the reservation price  is lower than the superreplication price.  The superreplication price is, in the sense that we will define below, the price corresponding to an infinite absolute risk averse agent.
\begin{proposition}
\label{inf}
Assume that Assumptions \ref{Qanalytic}, \ref{Sass} and \ref{NAQT} hold true. 
{Suppose that either $U$ is a non decreasing and bounded from above function or that $U$ is a concave function verifying Assumption \ref{utilitydefdif} and that Assumption \ref{SalphaI}  holds true}.
Fix some contingent claim  $G \in \mathcal{W}_{T}^{0,bo}$ and some $x \in \mathbb{R}$.
 Let $x \in \mathbb{R}.$ Then {
  $p(G,x)\leq \pi(G)$.  Assume that  $u(0,x)>-\infty$. Then   $p(G,x)\geq \pi(G)-x$ and $\lim_{x \to 0} p(G,x)=\pi(G)$. }

 \end{proposition}
\begin{proof}
Fix some $x  \in \mathbb{R}$. First Theorem \ref{BN2} gives some $\phi_{G} \in\mathcal{A}(G,\pi(G))$. As $U$ is non-decreasing, we get that
 \begin{align*}
u(0,x) = \sup_{\phi \in \mathcal{A}(U,0,x)} \inf_{P \in \mathcal{Q}^{T}} E_{P} U( V_{T}^{x,\phi}(\cdot))
&\leq  \sup_{\phi \in \mathcal{A}(U,0,x)} \inf_{P \in \mathcal{Q}^{T}} E_{P} U\left(V_{T}^{x+\pi(G), \phi+ \phi_{G}}(\cdot) - G(\cdot)\right)\\
&\leq  \sup_{\phi \in \mathcal{A}(U,G,x+ \pi(G))} \inf_{P \in \mathcal{Q}^{T}} E_{P} U\left(V_{T}^{x+\pi(G), \phi}(\cdot) - G(\cdot)\right)\\
&=u(G,x+\pi(G)),
\end{align*}
where Proposition \ref{pivsp} has been used for the second inequality~: If $\phi \in \mathcal{A}(U,0,x) = \mathcal{A}(0,x),$ 
then $\phi+\phi_{G} \in \mathcal{A}(G,\pi(G)+x)=
\mathcal{A}(U,G,\pi(G)+x)$.
So $p(G,x) \leq \pi(G)$ follows from \eqref{prixut}.  \\
{Assume now that  $u(0,x)>-\infty$.  If $x+z< \pi(G)$, then  $u(G,x+z) =-\infty< u(0,x),$ which implies that $p(G,x)\geq \pi(G)-x$ and  $\lim_{x \to 0} p(G,x)=\pi(G)$ follows.}
\end{proof}

\subsection{Asymptotic result}
\label{priceconvergence}
Intuitively speaking an agent who is totally risk averse will use the superreplication price: Whatever the possible outcomes
(where possible outcomes  are defined by  a set of probability measures), she does not want to incur any loss (see \eqref{piG}). We  prove now that  the utility indifference price goes to the superreplication price when the absolute risk aversion goes to infinity. We will  see also  in Proposition \ref{rrandce} that the absolute risk aversion remains a good tool to rank a proper notion of certainty equivalent in the presence of multiple-priors. The result of \citep{Pr65} remains true: Increasing absolute risk-aversion implies decreasing  certainty equivalent. \\

\noindent We consider some contingent claim $G \in \mathcal{W}^{0,bo}_{T}$ and a sequence of non-random  functions  $\left(U_{n}\right)_{n \geq 1}$ where for all $n \geq 1$, $U_n:  \mathbb{R} \rightarrow \mathbb{R}\cup \{- \infty\}$  is such that $U_n(x)=-\infty$ if $x<0$ and such that there exists some $x_n \in (0,\infty)$ verifying $U_n(x_n)>-\infty$. For all $n\geq 1$, we denote by 
$u_{n}(G,x)$ the value function in \eqref{theeq} for $U_n$, $p_{n}(G,x)$ the indifference price for $U_n$ (see \eqref{prixut}) and $r_{n}$ the absolute risk aversion for $U_n$ (see \eqref{absorisk}) when $U_n$ satisfies Assumption  
\ref{utilitydefdif}. 
\begin{theorem}
\label{t2}
Assume that Assumptions \ref{Qanalytic}, \ref{Sass}, \ref{NAQT}  and \ref{SalphaI} hold true.  Let $(U_n)_{n\geq 1}$ be a sequence  of  {concave}  utility functions satisfying Assumption \ref{utilitydefdif} such that  $\lim_{n \to +\infty} r_n(x)=+\infty$ for all $x>0.$ Then  
$\lim_{n \to+ \infty} p_{n}(G,x)=\pi(G)$ for all $x>0$ and $G \in  \mathcal{W}^{0,bo}_{T}.$
\end{theorem}
\begin{theorem}
\label{t2bis}{
Assume that Assumptions \ref{Qanalytic}, \ref{Sass} and \ref{NAQT}  hold true.  Let $G \in  \mathcal{W}^{0,bo}_{T}$ and $x_0 > 0.$ The  following two assertions hold true. \\
1. Let $(U_n)_{n\geq 1}$ be a sequence  of concave functions satisfying Assumption \ref{utilitydefdif}. Assume that the sequence $(\frac{U_n}{U_n\rq{}(x_0)}-\frac{U_n(x_0)}{U_n\rq{}(x_0)})_{n\geq 1}$ is bounded from above uniformly in $n$ for $n$ big enough and that  $\lim_{n \to +\infty} r_n(x)=+\infty$ for all $0<x<x_0$. Then $\lim_{n \to+ \infty} p_{n}(G,x_0)=\pi(G).$ \\
2. Let $(U_n)_{n\geq 1}$ be a sequence  of non decreasing functions which are bounded from above uniformly in $n$ for $n>N$ for some $N\geq 1.$ Assume that $\inf_{n\geq N}U_n(x_0)>-\infty$  and that  $\lim_{n \to +\infty} U_n(x)=-\infty$ for all $0<x<x_0.$ Then 
$\lim_{n \to+ \infty} p_{n}(G,x)=\pi(G)$.}
\end{theorem}

\begin{remark}
\label{rembound1}{
We comment on Theorem \ref{t2bis}. For the first item, note that if for some $N>1$, the $(U_n)_{n\geq N}$  are bounded from above uniformly in $n,$  
$\inf_{n\geq N} U_n(x_0) >-\infty$ and $\inf_{n\geq N} U_n\rq{}(x_0) >0$ then the sequence  $(\frac{U_n}{U_n\rq{}(x_0)}-\frac{U_n(x_0)}{U_n\rq{}(x_0)})_{n\geq N}$ is bounded from above uniformly in $n$. This is not a necessary condition. Let $U_{n}(x)=-e^{-r_{n}x}$  for $x\geq 0$ and $U_{n}(x)=-\infty$ otherwise where $\lim_{n \to +\infty} r_{n}=+\infty$.  Then for all $N \geq 1$ and all $x_0>0$, $\inf_{n\geq N} U_n\rq{}(x_0) =0$. Nevertheless, for all $x_0>0$,
$\frac{U_n (x)}{U_n\rq{}(x_0)}-\frac{U_n(x_0)}{U_n\rq{}(x_0)}=-\frac1{r_n}e^{-r_n(x-x_0)} + \frac1{r_n}$ is bounded from above uniformly in $n$ for $n$ big enough. \\
Finally, if $\inf_{n\geq N} U_n\rq{}(x_0) >0,$ a straightforward adaptation of   \citep[Lemma 4]{CR06} shows that 
$\lim_{n \to +\infty} U_n(x)=-\infty$ for all $0<x<x_0$. But  if this last condition holds true, the  convergence result does not even need  the functions to be either concave or differentiable, see item 2.}
\end{remark}
\begin{proof}
The proof of Theorems \ref{t2} and \ref{t2bis} are reported in Section \ref{proofsec}.
\end{proof}
\begin{remark} 
Under the same assumptions,  Theorem \ref{t2} implies that if  $G \in  \mathcal{W}^{\infty}_{T}$ then 
 $\lim_{n \to +\infty} p^{B}_{n}(G,x)=\pi^{sub}(G)$.
 \end{remark}
\begin{remark} We give some intuition on Theorem \ref{t2}:  For a utility function that has a sort of infinite absolute risk  aversion, we show that
the utility indifference price is equal to the superreplication price. Let $G \in \mathcal{W}^{0,bo}_{T}$ and assume for sake of simplicity that $\pi(G) \geq 0$.
Fix some $x \geq \pi(G)$ and introduce the following utility function $U_{\infty}: \mathbb{R} \to \mathbb{R} \cup \{-\infty\}$
where $U_{\infty}(y):= -\infty 1_{(-\infty,x)}(y)$.
The absolute risk  aversion of $U_{\infty}$ is not defined, but
  $U_{n}(y) := - e^{-n(y-x)}$ for $y \geq 0$ and $U_{n}(y):=-\infty$
for $y<0$ verifies Assumption \ref{utilitydefdif} and  for $y \geq 0$ fixed with $y \neq x$,
$\lim_{n \to +\infty} U_{n}(y)=U_{\infty}(y)$. Since the absolute risk  aversion of the utility
functions $U_{n}$ goes to $+\infty$, one may say that $U_{\infty}$ has an
infinite absolute risk  aversion.
We now show that the superreplication price of $G$ is equal to its utility indifference price evaluated with the function  $U_{\infty}$. First, it is easy to see that $u_{\infty}(0,x)=0$.
 Since for all $\phi \in \Phi$, $y \in \mathbb{R}$, $U_{\infty}^{+}(V_{T}^{y,\phi}(\cdot)-G(\cdot))=0$, we have that
  $\Phi(U_{\infty},G,y)=\Phi(U_{\infty},0,y)=\Phi$ and $\mathcal{A}(G,y)=\mathcal{A}(U_{\infty},G,y)$.  Theorem \ref{BN2} implies that
$\mathcal{A}(U_{\infty},G,y)$ is not empty for all $y \geq \pi(G)$. Now fix some $ z  < \pi(G)$ and $\phi \in \mathcal{A}(U_{\infty},G,x+z)$.
There exists some $P \in \mathcal{Q}^{T}$ such that $P(V_{T}^{z,\phi}(\cdot) -G(\cdot)<0)>0$ or equivalently $P(V_{T}^{x+z,\phi}(\cdot) -G(\cdot)<x)>0$ which implies that
$E_{P} U_{\infty}(V_{T}^{x+z,\phi}(\cdot)-G(\cdot)) =-\infty$.  Hence for all $\phi \in \mathcal{A}(U_{\infty},G,x+z)$,
$\inf_{P \in \mathcal{Q}^{T}} E_{P} U_{\infty}(V_{T}^{x+z,\phi}(\cdot)-G(\cdot)) =-\infty$ and
 $u_{\infty}(G,x+z)=-\infty<u_{\infty}(0,x)$ follows.  The definition  of $p(G,x)$ implies that $p(G,x) \geq z$ and letting $z$ go to $\pi(G)$, $p(G,x) \geq \pi(G)$. {Proposition \ref{inf} shows that the equality holds true.}
\end{remark}
\begin{remark}
The dominated case is of special interest. Assume that there exists some $P^* \in \mathcal{Q}^{T}$ such that for all $P \in \mathcal{Q}^{T}$, $P$ is absolutely continuous with respect to $P^*$. Then a set is of $\mathcal{Q}^{T}$-full measure if and only if it is of $P^*$-full measure. So $NA(\mathcal{Q}^{T})$ is equivalent to $NA(P^*)$ i.e. 
$V_{T}^{0,\phi} \geq 0 \; P^*\mbox{-a.s. }$ for $\phi  \in \Phi$ implies that $V_{T}^{0,\phi}  = 0 \;P^*\mbox{-a.s. }$ Moreover 
$$\pi(G)=\inf \left\{ \phi \in \Phi,\; V_{T}^{x,\phi} \geq G \; P^*\mbox{a.s.} \right\}:=\pi^{P^*}(G).$$
Theorem \ref{t2} rephrases as follows: For $G \in  \mathcal{W}^{0,bo}_{T}$, if Assumptions \ref{Qanalytic}, \ref{Sass}, \ref{SalphaI} and $NA(P^*)$  hold true and  if $\lim_{n \to +\infty} r_n(x)=+\infty$ for all $x>0$ then $\lim_{n \to+ \infty} p_{n}(G,x)=\pi^{P^*}(G)$ for all $x>0$. At the limit we are back to a uni-prior setup. {The same holds true for Theorem \ref{t2bis}  changing Assumption \ref{NAQT} by $NA(P^*)$.}
\end{remark}

\subsection{\textbf{Absolute risk aversion and certainty equivalent}}
\label{secrav}
In the uni-prior case, we know from \citep{Pr65} that the  absolute risk-aversion allows to rank the certainty equivalent:  An
increasing absolute risk-aversion  implies a decreasing certainty equivalent. In this section we will see how this property extend to the multiple-priors case.

Recall the uni-prior case where $\mathcal{Q}^{T}=\{P\}$ and the preferences are represented by a  utility function $U.$ For  a given    asset whose payoff at maturity is  $G$,
the certainty equivalent $e(G,P)$ is the amount of cash that will make the agent indifferent (in the sense of the expected utility
evaluation) between receiving the cash  and the asset  $G$
$$E_{P} U(e(G,P))=U(e(G,P))=  E_{P} U(G(\cdot)).$$
The risk premium
$\lambda(G,P):= E_{P} G(\cdot)-e(G,P)$ is the amount  the agent is willing to lose in order to be indifferent (in the sense of the expected utility
evaluation) between the sure quantity $E_{P} G(\cdot)-\lambda(G,P)$
and the random variable $G.$ \\
The following proposition gives the definition of the certainty equivalent in a multiple-priors  framework and   provides conditions for existence and uniqueness.
It also establishes  that under suitable assumptions $\lambda(G,P) \geq 0$. The risk premium  is thus a measure of  the risk-aversion of the agent:
The higher the risk premium, the more risk-averse the agent is.
\begin{proposition}
\label{certainequivdef} Let  $U$ be  a {concave} utility function satisfying Assumption \ref{utilitydefdif} and let
\begin{small}
\begin{eqnarray}
\label{WTU}
\mathcal{W}^{+}_{T}(U):= \left\{G \in \mathcal{W}^{0,+}_T, \, G(\cdot) <+\infty \; \mathcal{Q}^{T} \mbox{-q.s.,}\, \; E_{P} U^+(G(\cdot)) <\infty, \forall P\in \mathcal{Q}^{T},
\sup_{P \in \mathcal{Q}^{T}}E_{P} U^-(G(\cdot))<\infty \right\}
\end{eqnarray}
\end{small}
Assume that $G \in \mathcal{W}^{+}_{T}(U)$.  Then, there exists unique $e(G,P)$ and $e(G)$ in $[0,\infty)$ such that \begin{align}
\label{certaintyRnonra}
E_{P} U(e(G,P))=U(e(G,P))  = &  E_{P} U(G(\cdot)), \, \forall P \in \mathcal{Q}^{T} \\
\label{certaintyrobRnonra}
\inf_{P \in \mathcal{Q}^{T}} E_{P} U(e(G))=U({e}(G)) =&  \inf_{P \in \mathcal{Q}^{T}} E_{P} U(G(\cdot)).
\end{align}
Moreover, $e(G,P) \leq E_{P} G(\cdot)$ for all $P \in \mathcal{Q}^{T}$ and $${e}(G)=\inf_{P \in \mathcal{Q}^{T}} e(G,P) \leq \inf_{P \in \mathcal{Q}^{T}} E_{P} G(\cdot).$$
Furthermore the multiple-priors  risk premium defined by    ${\lambda}(G):= \sup_{P \in \mathcal{Q}^{T}} \lambda(G,P)$ satisfies  $$0 \leq {\lambda}(G) \leq \sup_{P \in \mathcal{Q}^{T}}  E_{P}G(\cdot) -{e}(G).$$
\end{proposition}
Note that \eqref{certaintyRnonra} is true if we assume only that $E_{P} U^{-}(G(\cdot))<\infty$ for all $P \in \mathcal{Q}^{T}$.\\
\begin{proof}
See Section \ref{appendix2}.
\end{proof}

\noindent Finally, we consider two investors $A$ and $B$ whose respective  utility functions $U_{A}$ and $U_{B}$ satisfy  Assumption \ref{utilitydefdif} and are concave. Recall that in the uni-prior case with  $ \mathcal{Q}^{T}=\{P\}$ investor $A$ has greater absolute risk-aversion than investor B (i.e. $r_{A}(x) \geq r_{B}(x)$  for all $x>0$) if and only if investor $A$ is globally more risk averse than investor $B$, in the sense that the certainty equivalent of every contingent claim is smaller for $A$ than for $B$ (i.e. $e_A(G,P) \leq e_B(G,P)$ for any $G \in \mathcal{W}_{T}^{0,+}$) see \citep{Pr65}. We propose the following  generalization of this result in the multiple-priors  framework.
\begin{proposition}
\label{rrandce}
Let $U_{A}$, $U_{B}$ be {concave}  utility functions satisfying Assumption \ref{utilitydefdif}. Let $\mathcal{W}^{+}_{T}(U_{A,B}):=\mathcal{W}^{+}_{T}(U_{A}) \cap \mathcal{W}^{+}_{T}(U_{B})$ (see \eqref{WTU}). \\
\noindent $1.$ If for all $x > 0$, $r_{A}(x) \geq r_{B}(x)$ then $e_{A}(G) \leq {e}_{B}(G)$ for all $G  \in \mathcal{W}^{+}_{T}(U_{A,B})$.\\
\noindent $2.$  If for  all $G \in \mathcal{W}^{+}_{T}(U_{A,B})$,  $e_{A}(G) < {e}_{B}(G)$ then $r_{A}(x) \geq r_{B}(x)$ for all $x > 0$.
\end{proposition}
\begin{proof}
See Section \ref{appendix2}.
\end{proof}\\

Proposition \ref{rrandce} shows that the absolute risk aversion  allows the ranking of the multiple-priors  certainty equivalent despite  the presence of uncertainty (and thus uncertainty aversion). The reason for this is related  to the multiple-priors  representation we have chosen and the fact that the two agents have the same set of priors.

\begin{remark}
\label{aprm}
We briefly make the link with the monetary  risk measures introduced in \citep{Art99}, see also \citep{Car09}.
Let $\mathcal{X} \subset \mathcal{W}^{0}_{T}$ be  a linear space of random variables (containing the constant random variables).
A monetary risk measure is a mapping $\rho: G \in \mathcal{X} \mapsto \rho(G) \in \mathbb{R} \cup \{ \pm \infty\}$ that verifies the {monotonicity} and the {cash invariance} properties (see  \citep[Section 4.1]{fs}).
The measure  is said to be a normalized  if $\rho(0)=0$. 
One may measure the risk of a position using 
$\rho_{x}:\, G \in \mathcal{X} \mapsto p(-G,x)$ for some  $x \geq 0$ fixed
(see for example \citep[Definition 1.2]{Car09}) or
 consider
$\rho :\, G \in \mathcal{X} \mapsto \pi(-G).$ Indeed
under  Assumptions  \ref{Sass} and \ref{NAQT}, $\rho$ is a  normalized  convex monetary risk measure  on  $\mathcal{W}_{T}^{0}$.
If $U$ satisfies Assumption \ref{utilitydefdif} and under the assumptions of Proposition \ref{pivsp},  
$\rho_{x}$ is a monetary risk measure on  $\mathcal{W}_{T}^{0,bo}$. If one also assumes that
  $u(0,x)>-\infty$, then   $\rho_{x}$ is a  convex monetary risk  measure  on  $\{G \in \mathcal{W}_{T}^{0,bo}, \, u(-G,z)<\infty, \forall z \in \mathbb{R}\}$. If furthermore   $u(0,x-\delta)<u(0,x)$ for all $\delta>0$, then $\rho_{x}$ is normalized.

The proof is classical (see for instance \citep[Proposition 1.11]{Car09})   and   therefore omitted.

\end{remark}

\section{Example}
\label{example}
To illustrate the previous results we provide the following one period non-dominated example. 
Let  $\Omega=  [-1,\infty)$. We assume that the risky asset is given by $S_{0}=1$ and  $S_{1}(\o)=1+\o$. 
Here $\Phi=\mathbb{R}$ and Assumption \ref{Sass} is satisfied.  
We will compute the utility indifference price and the superreplication price of  the contingent claim $G(\o)=1_{[1,\infty)}(\o)$. 

Fix some  $p_{1},p_{-1} \in (0,1)$ such that $0<p_{1}+p_{-1}<1$. For $c \geq 1$, let $P_{c}$ be the probability measure on $\Omega$ be defined by 
$P_c(\cdot)=p_{1} \delta_{c}(\cdot)+p_{-1}\delta_{-1}(\cdot)+ (1-p_1-p_{-1}) \delta_{0}(\cdot)$ where $\delta_x(B):=1 \Leftrightarrow x \in B$ for all $x\in \Omega$ and $B\in \Bc(\Omega)$.  We assume furthermore that   $p_{1} < p_{-1}$: The agents think that the risky asset is more likely to go down than up. The set of priors is given by  $\mathcal{Q}:=\mbox{Conv}\{P_{c}, c \geq 1\}.$ 
Using \citep[Corollary 7.21.1 p130]{BS}, $T:c \in \mathbb{R} \mapsto P_{c}(\cdot) \in \mathcal{P}(\O)$ is a homeomorphism, so $\{P_{c}, c \geq 1\}=T([1,\infty))$ is a Borel set of  $\mathcal{P}(\O)$. As in  \citep[Proofs for Section 2.3]{Bart16}, we find that $\mathcal{Q}$ is a Borel set. 
So $\mbox{graph}(\mathcal{Q})=\O \times \mathcal{Q}$ is also a Borel set and  a fortiori an analytic set in $\O \times  \mathcal{P}(\O)$. Assumption \ref{Qanalytic}  is verified.  It is  clear that $NA(P)$ holds true for any $P \in \mathcal{Q}$. So  $NA(\mathcal{Q})$ also holds true.  Moreover the set $\mathcal{Q}$ is not dominated. Indeed for a dominating measure $P$, $P(\{c\})>0$ for all $ c \geq 1$ which is impossible.  \\
Finally, we take  a sequence of utility functions  defined by $U_{n}(x)=-e^{-r_{n}x}$  for $x\geq 0$ and $U_{n}(x)=-\infty$ otherwise where $\lim_{n \to +\infty} r_{n}=+\infty$. The assumptions of Theorem \ref{t2bis} item 1 are satisfied for all $x_0>0$, see Remark \ref{rembound1}. Note that as $\lim_{n \to +\infty} r_{n}=+\infty$ we will assume  that $r_n>\mbox{ln} \left(\frac{p_{-1}}{p_{1}}\right)$. 
\\
First, we focus on the admissibility condition. For  $x \geq 0$,  we get that 
\begin{align*}
 \mathcal{A}(0,x) & =  \left\{h \in \mathbb{R},\; -\frac{x}{c} \leq h\leq x, \; \forall \, c \geq 1\right\}=\left\{h \in \mathbb{R},\; 0 \leq h \leq x\right\} \\
\mathcal{A}(G,x)&=\left\{h \in \mathbb{R},\; \frac{1-x}{c} \leq h \leq x, \; \forall \, c\geq 1\right\}= \{h \in \mathbb{R},\;  \max(0,1-x) \leq h \leq x \}.
\end{align*}
It is easy to see  that $\pi(G)=\frac{1}{2}$. 
Note that both $\mathcal{A}(0,x)$ and $\mathcal{A}(G,x)$ are included in the half-real line: The agent cannot sell the risky asset in \eqref{theeq}. This implies that  the worst case prior in $\mathcal{Q}$  corresponds to  $P_1$ (recall that $c\geq 1$).\\
We   evaluate first $u_{n}(0,x)$ for $x \geq 0$. 
\begin{align}
\nonumber u_{n}(0,x)
&=\sup_{0 \leq h \leq x} -e^{-r_{n}x} \left(p_{1} e^{-r_{n} h} + 1-p_{1}-p_{-1}+ p_{-1} e^{-r_{n}(-h)}\right)
=-e^{-r_{n}x}.
\end{align}
Indeed, the minimum of $h \mapsto p_{1} e^{-r_{n} h} + 1-p_{1}-p_{-1}+ p_{-1} e^{r_{n}h}$ is reached for $\frac{1}{2r_{n}} \mbox{ln} \left(\frac{p_{1}}{p_{-1}}\right)<0$  since  $p_{1} < p_{-1}$.\\
If  $y<\frac{1}{2}=\pi(G),$ then $\mathcal{A}(G,y) \neq \emptyset$ and $u_{n}(G,y)=-\infty$. Let $y \geq \frac{1}{2}$.
 \begin{align*}
u_{n}(G,y)
&=\sup_{\max(0,{1-y}) \leq h\leq y} -e^{-r_{n}y} \left(p_{1} e^{-r_{n} (h-1)} + 1-p_{1}-p_{-1}+ p_{-1} e^{-r_{n}(-h)}\right).
\end{align*}  
Remark that the minimum of $h \mapsto p_{1} e^{-r_{n} (h-1)} + 1-p_{1}-p_{-1}+ p_{-1} e^{r_{n}h}$ is reached for $h^*_{n}:=\frac{1}{2}- \frac{1}{2r_{n}} \mbox{ln} \left(\frac{p_{-1}}{p_{1}}\right)$. As we have assumed that $p_{-1}>p_1$, $r_n>\mbox{ln} \left(\frac{p_{-1}}{p_{1}}\right)$ for all $n$,   $0<h^*_{n}<\frac{1}{2}$ and  
we obtain that \begin{align}
\label{ung1}
 u_{n}(G,y)=
\begin{cases} 
-e^{-r_{n}y} \left( 2e^{\frac{r_{n}}{2}}\sqrt{p_{1}p_{-1}}+(1-p_{1}-p_{-1}) \right) \;  \mbox{if $y \geq \frac{1}{2}+ \frac{1}{2r_{n}} \mbox{ln} \left(\frac{p_{-1}}{p_{1}}\right)$}\\
-e^{-{r_{n}y}} \left( p_{1}e^{r_{n}y}+p_{-1}e^{r_{n}(1-y)}+(1-p_{1}-p_{-1})\right) \; \mbox{otherwise}.
\end{cases}
\end{align}
Now starting from a wealth $x  \geq 0$, we want to find the smallest $z \in \mathbb{R}$ such that  $u_n(G,x+z) =u_{n}(0,x)$. 
Fix $x >0$. As $\lim_{n \to +\infty} u_{n}(0,x)=0$ there is some $N_{x}$ such that for $n \geq N_{x},$  $u_{n}(0,x) \geq -\frac{p_{1}}{2}$. 
If $\frac{1}{2} \leq x+z < \frac{1}{2}+\frac{1}{2r_{n}} \mbox{ln} \left(\frac{p_{-1}}{p_{1}}\right),$
 for $n \geq N_{x},$  \eqref{ung1} shows  that 
 $$u_{n}(G,x+z)\leq -p_{1}< u_{n}(0,x).$$ Thus for $n \geq N_{x},$ we need to assume that 
  $x+z \geq  \frac{1}{2}+ \frac{1}{2r_{n}} \mbox{ln} \left(\frac{p_{-1}}{p_{1}}\right)$. 
 Then \eqref{ung1} implies that
\begin{align}
\label{piex}
p_{n}(G,x)= \frac{1}{2}+ \frac{1}{r_{n}} \mbox{ln}\left( 2\sqrt{p_{1}p_{-1}}+e^{-\frac{r_{n}}{2}}(1-p^{}_{1}-p^{}_{-1})\right) \leq \pi(G) 
\end{align}
as soon as $x+ \frac{1}{r_{n}} \mbox{ln}\left( 2 p_1+e^{-\frac{r_{n}}{2}}\sqrt{\frac{p_{1}}{p_{-1}}}(1-p^{}_{1}-p^{}_{-1})\right) \geq 0$. 
{ In this case $p_{n}(G,x)$ is well-defined for $x<\pi(G).$ For example, if 
$p_{-1}= 2/3$, $p_{1}=1/4$ and $r_n=n$, this last inequality is always satisfied for $x=0.1$ as soon as $n\geq 7$ and then  $p_{7}(G,0.1)=0,471$ 
or for $x=0.4$ as soon as $n\geq 1$ and then $p_{1}(G,0.4)=0,417.$}

{
\begin{remark}
\label{remrate}
Note that the convergence rate, even in the presence of uncertainty, is only driven by the risk aversion $r_{n}$ (see \eqref{piex}).   
The optimal strategy for $u_{n}(G,y)=u^{P_{1}}_{n}(G,y)$ (recall that $P_1$ is the worst-case scenario) is given, for $n$ large enough, by $ \frac{1}{2}-\frac{1}{2r_{n}} \mbox{ln} \left(\frac{p_{-1}}{p_{1}}\right)$ which is different from the superreplication strategy $h=\frac12$ but converges to it with the risk aversion coefficient. Proving such result in a general setting i.e. finding a kind of accumulation point of the optimal  strategies's sequence which is a superhedging  price, is left for further research.
\end{remark}
}
\section{Proofs of the results}
\label{secproof}
We  borrow some ideas from \citep{CR06} adapted to the multiple-priors set-up.
\subsection{Proof of Theorems  \ref{t2} and   \ref{t2bis}}
\label{proofsec}
 The proof is based on two ingredients: Some closure property (see Lemma \ref{L3}) and some boundness property (see Lemma \ref{L10} which has already been used in Proposition \ref{pivsp} to fix some integrability issues).  We first  introduce, the set of terminal wealth including the possibility of throwing away money starting from capital $x \in \mathbb{R}$
\begin{align*}
\mathcal{C}_{x}^{T}:=\{V_{T}^{x,\phi},\; \phi \in \Phi\} - \mathcal{W}^{0,+}_{T}.
\end{align*}
In the sequel we will write $X \in \mathcal{C}_{x}^{T}$ if there exists some $\phi \in \Phi$ and $Z \in \mathcal{W}^{0,+}_{T}$ such that $X=V_{T}^{x,\phi} - Z$ $\mathcal{Q}^{T}$-q.s. Under the assumptions of Lemma \ref{L3} the set $\mathcal{C}_{x}^{T}$ has a classical closure property (in the $\mathcal{Q}^{T}$ quasi-sure sense, see \citep[Theorem 2.2]{BN}). Note that the same comment as in  Remark \ref{Theo22} applies.
\begin{lemma}
\label{L3}
Assume that Assumptions \ref{Sass} and    \ref{NAQT} hold true.
Fix some $z \in \mathbb{R}$ and let $B \in \mathcal{W}_{T}^{0}$ such that $B  \notin \mathcal{C}^{T}_{z}$.  Then there exists some $\varepsilon>0$
such that
\begin{align}
\label{Eqvare}
\inf_{\phi \in \Phi} \sup_{P \in \mathcal{Q}^{T}} P(V_{T}^{z,\phi} < B -\varepsilon)> \varepsilon.
\end{align}
\end{lemma}
\begin{proof}
Assume that \eqref{Eqvare} does not hold true.  Then, for all $n \geq 1$, there exists some $\phi_{n} \in \Phi$ such that  $P(V_{n} < B -\frac{1}{n})  \leq \frac{1}{n}$ for all $P \in \mathcal{Q}^{T}$, where $V_{n}:=V_{T}^{z,\phi_{n}}$. Set 
$$K_{n}:= \left(V_{n} - \left(B- \frac{1}{n}\right) \right) 1_{\{V_{n} \geq B -\frac{1}{n}\}} \in \mathcal{W}_{T}^{0,+}.$$ 
Then $V_{n} - K_{n} \in \mathcal{C}^{T}_{z}$. Moreover, for all  $P \in \mathcal{Q}^{T},$  
$P( |V_{n} -K_{n} -B|> \frac{1}{n}) = P\left(V_{n} < B -\frac{1}{n}\right) \leq  \frac{1}{n}.$ Thus $\lim_{n \to +\infty} \sup_{P \in \mathcal{Q}^{T}}  P( |V_{n} -K_{n} -B|> \frac{1}{n}) =0 $.  If we prove that
there exists a subsequence $(n_{k})_{k \geq 1}$ such that $(V_{n_{k}} - K_{n_{k}})_{k \geq 1}$ converges to $B$ $\mathcal{Q}^{T}$-q.s. (i.e. on a $\mathcal{Q}^{T}$-full measure set), \citep[Theorem 2.2]{BN} implies that    $B \in \mathcal{C}^{T}_{z}$. This contradiction will achieve the proof.
So fix $\eta>0$ and consider the  sub-sequence $(V_{n_{k}} - K_{n_{k}})_{k \geq 1}$ such that
$$\sup_{P \in \mathcal{Q}^{T}} P(A_{k}) \leq \frac{1}{2^{k}} \mbox{ 
where } A_{k}:=\left\{ |V_{n_{k}} - K_{n_{k}}-B| > \frac{1}{n_k}\right\}.$$ 
As $\sum_{k \geq 1} \sup_{P \in \mathcal{Q}^{T}} P(A_{k})
<\infty,$ 
 Borel-Cantelli's Lemma for capacity (see \citep[Lemma 5]{DHP11}) 
implies that $\sup_{P \in \mathcal{Q}^{T}} P(\limsup_{k} A_{k})=0$.
Hence  $\O^{T} \backslash{\limsup_{k} A_{k}}$ is a $\mathcal{Q}^{T}$-full measure set
on which   $|V_{n_{k}}(\cdot) - K_{n_{k}}(\cdot)-B(\cdot)| \leq \eta$ holds true for $k$ big enough.
\end{proof}\\

Under suitable assumptions, the next proposition  establishes that  whatever the strategy is, the wealth at time $T$ starting from capital $x$ is uniformly bounded.
\begin{lemma}
\label{L10}
 Assume that Assumptions \ref{Qanalytic}, \ref{Sass}, \ref{NAQT} and \ref{SalphaI}  hold true.
 Then, for all $x\in \mathbb{R}$, $\phi \in \mathcal{A}(0,x)$ and  $ 0 \leq t \leq T$, \begin{align}
 \label{Mt}
 |V_{t}^{x,\phi}(\cdot)| \leq  |x| M_{t}(\cdot) \; \mathcal{Q}^{t}\mbox{-q.s}
 \end{align}
   where $M_0:=1$ and $M_{t}(\o^{t}):=\prod_{s=1}^{t}\left(1+ \frac{|\Delta S_{s}(\o^{s})|}{\alpha_{s-1}(\o^{s-1})}\right)$. Moreover, $M_{t}$ and $V_{t}^{x,\phi}$ belong to $\mathcal{W}^{r}_{t}$  for all $0 \leq t \leq T$ and
   $0 < r<\infty$.
\end{lemma}

\begin{proof}
We  use  similar arguments  as in the proof of   \citep[Theorem  3.6]{BC16}. 
Let $x \leq 0$ and $ \phi \in \mathcal{A}(0,x)$. Then $V_T^{0,\phi} \geq 0  \; \mathcal{Q}^{T}\mbox{-q.s}$ and by  NA($\mathcal{Q}^{T}$) and \citep[Lemma A.33]{BC16},  $V_t^{0,\phi} \geq 0  \; \mathcal{Q}^{t}\mbox{-q.s}$ and $V_t^{0,\phi} = 0  \; \mathcal{Q}^{t}\mbox{-q.s}$. So \eqref{Mt} holds trivially true. 
So fix  $x > 0$ and $ \phi \in \mathcal{A}(0,x)$.  For all   $ 1 \leq t \leq T$  and $\o^{t-1} \in \O_{NA}^{t-1}$ (recall Proposition \ref{alpha} for the definition of $\O_{NA}^{t-1}$), we denote by $\phi_{t}^{\perp}(\o^{t-1})$ the
orthogonal projection of $\phi_{t}(\o^{t-1})$ on the vector space ${D}^{t}(\o^{t-1})$ (see again Proposition \ref{alpha}). We have for all $\o^{t-1} \in \O^{t-1}_{NA}$ that
 \begin{align}
\label{phiorth}
\phi_{t}(\o^{t-1}) \Delta S_{t}(\o^{t-1},\cdot)=\phi^{\perp}_{t}(\o^{t-1}) \Delta S_{t}(\o^{t-1},\cdot) \; \mathcal{Q}_{t}(\o^{t-1})\mbox{-q.s.}
\end{align}
see \citep[Remark 3.10]{BC16}.
As  $V_{T}^{x,\phi} \geq 0$ $\mathcal{Q}^{T}$-q.s. and as Assumptions \ref{Qanalytic}, \ref{Sass} and \ref{NAQT} hold true,     \citep[Lemma A.33]{BC16}  applies together with \citep[Lemma 3.4]{Nutz} and  $\mathcal{H}^{t-1}:=\{\o^{t-1} \in \O^{t-1},\;  V_{t-1}^{x,\phi}(\o^{t-1}) +\phi_{t}(\o^{t-1}) \Delta S_{t}(\o^{t-1},\cdot) \geq 0 \; \mathcal{Q}_{t}(\o^{t-1}) \mbox{-q.s.}\} \in \Bc_c(\O^{t-1})$ is a $\mathcal{Q}^{t-1}$-full measure set.   Fix now some $1 \leq t \leq T$, $\o^{t-1} \in \mathcal{H}^{t-1}\cap \O^{t-1}_{NA}$. We prove   that \begin{align}
\label{phiperp}
|\phi^{\perp}_{t}(\o^{t-1})| \leq \frac{|V_{t-1}^{x,\phi}(\o^{t-1})|}{\alpha_{t-1}(\o^{t-1})}.
\end{align}
If $\phi^{\perp}_{t}(\o^{t-1})=0$ there is nothing to prove and one may assume  that $\phi^{\perp}_{t}(\o^{t-1}) \neq0$. First, using \eqref{phiorth} and $\o^{t-1} \in \mathcal{H}^{t-1} \cap \O^{t-1}_{NA}$, we get that  \begin{align}
\label{c1}
 V_{t-1}^{x,\phi}(\o^{t-1})+ \phi^{\perp}_{t}(\o^{t-1})\Delta S_{t}(\o^{t-1},\cdot) \geq 0 \; \mathcal{Q}_{t}(\o^{t-1}) \mbox{-q.s}.
 \end{align}
Now, we proceed by contradiction and assume that \eqref{phiperp} does not hold true. Let
$$B:= \{ \phi^{\perp}_{t}(\o^{t-1})\Delta S_{t}(\o^{t-1},\cdot) < -\alpha_{t-1}(\o^{t-1}) |\phi^{\perp}_{t}(\o^{t-1})|\}.$$ From Proposition \ref{alpha}, there exists some $P_{\phi} \in \mathcal{Q}_{t}(\o^{t-1})$ such that $P_{\phi}(B) > \alpha_{t-1}(\o^{t-1})>0$. But,  for all $\o_{t} \in B$
\begin{align*}
V_{t-1}^{x,\phi}(\o^{t-1})+  \phi^{\perp}_{t}(\o^{t-1}) \Delta S_{t}(\o^{t-1},\o_{t}) < |V_{t-1}^{x,\phi}(\o^{t-1})| - \alpha_{t-1}(\o^{t-1}) |\phi^{\perp}_{t}(\o^{t-1})| <0,
\end{align*}
a contradiction with \eqref{c1} and therefore \eqref{phiperp} holds true.\\
Now, we prove by induction that \eqref{Mt} holds true for all $0 \leq t \leq T$.
For $t=0$ this is trivial. Assume  that for some $t \geq 1$, there exists some $\mathcal{Q}^{t-1}$-full measure set  $\widetilde{\O}^{t-1} \in \mathcal{B}_{c}(\O^{t-1})$ on which  \eqref{Mt} is true at stage $t-1$. Let
 $$\O^{t}_{EQ}:=\{(\o^{t-1},\o_{t}) \in \O^{t-1}\times \O_t,\;\phi_{t}^{\perp}(\o^{t-1}) \Delta S_{t}(\o^{t-1},\o_{t})=\phi_{t}(\o^{t-1}) \Delta S_{t}(\o^{t-1},\o_{t})\} .$$
It is clear that $\O^{t}_{EQ} \in \mathcal{B}_{c}(\O^t)$. For some $P=P_{t-1} \otimes p_{t}  \in \mathcal{Q}^{t}$,  \eqref{phiorth} and   Fubini's Theorem (see  \citep[Proposition 7.45 p175]{BS}) imply that
$P(\O^{t}_{EQ})=1$ (recall that $\O^{t-1}_{NA}$ is of $\mathcal{Q}^{t-1}$ full measure). 
 Set $\widehat{\O}^{t-1}:=\widetilde{\O}^{t-1} \cap \mathcal{H}^{t-1} \cap \O^{t-1}_{NA}$ and $\widetilde{\O}^{t}= \Omega^{t}_{EQ} \cap \left(\widehat{\O}^{t-1} \times \O_{t} \right)$. It is clear that $\widetilde{\O}^{t} \in \Bc_c(\Omega^t)$ and is a $\mathcal{Q}^{t}$-full measure set. For all
 $\o^{t}=(\o^{t-1},\o_{t}) \in \widetilde{\O}^{t}$ 
   \begin{align*}
   |V_{t}^{x,\phi}(\o^{t-1},\o_{t})|= & |V_{t-1}^{x,\phi}(\o^{t-1})+ \phi^{\perp}_{t}(\o^{t-1}) \Delta S_{t}(\o^{t-1},\o_{t})|\\
				  \leq  &  |V_{t-1}^{x,\phi}(\o^{t-1})| \left(1 + \frac{|\Delta S_{t}(\o^{t-1},\o_{t})|}{\alpha_{t-1}(\o^{t-1})}\right)\\
				  \leq   & x M_{t-1}(\o^{t-1})  \left(1 + \frac{|\Delta S_{t}(\o^{t-1},\o_{t})|}{\alpha_{t-1}(\o^{t-1})}\right)
   \end{align*}
and  \eqref{Mt} is proved. 
For all $0 \leq r <\infty$ and $1 \leq s \leq T$,  $\Delta S_{s}, \;\frac{1}{\alpha_{s}} \in \mathcal{W}^{r}_{s}$  (see Assumption  \ref{SalphaI}), so  both $M_{t}$ and $V_{t}^{x,\phi}$ belong to $\mathcal{W}^{r}_{t}$  for all $1 \leq t \leq T$.
\end{proof}\\

We are now in position to prove Theorem \ref{t2} and \ref{t2bis}. \\
\begin{proof}[\textit{ of Theorems \ref{t2} and \ref{t2bis}}]
 Let $G \in  \mathcal{W}^{0,bo}_{T}$ and $b \geq 0$ such that $G \geq  -b$ $\mathcal{Q}^{T}$-q.s. and fix some  $x_0 >0.$  \\
We prove first Theorem \ref{t2} and the first item of Theorem \ref{t2bis}. 
As in \citep{CR06}, we may replace $U_{n}$ by $\hat{U}_{n}:=\alpha_{n} U_{n} + \beta_{n}$ for  some $\alpha_{n} >0$, $\beta_{n} \in \mathbb{R}.$ If $U_n$ is  concave, strictly increasing or twice continuously differentiable, $\hat{U}_{n}$ will display the same property. Thus under the assumptions of Theorem \ref{t2}, the $\hat{U}_{n}$ are concave and satisfy Assumption \ref{SalphaI}. 
Moreover, the absolute risk aversion and the utility indifference price for $U_{n}$ and $\hat{U}_{n}$ (which will be denoted by $\hat{p}_{n}(G,x_0)$)  are the same (see \eqref{absorisk},  \eqref{theeq} and 
\eqref{prixut}).  Thus  it is enough to show that  $\lim_{n \to +\infty} \hat{p}_{n}(G,x_0)=\pi(G)$. As  $U'_{n}(x_0) \in (0,\infty)$ (recall that $U_{n}$ is {concave} and strictly increasing),
we choose  $\alpha_{n}=\frac{1}{U'_{n}(x_0)}$ and $\beta_{n}=-\frac{U_n(x_0)}{U'_{n}(x_0)}$  which leads to  $\hat{U}_{n}(x_0)=0$ and  $\hat{U}'_{n}(x_0)=1$ for all $n \geq 1$. Note that under the assumptions of Theorem \ref{t2bis} item 1,  the $\hat{U}_{n}$ are concave and there exist some $N>1$ and $k>0$ such that   $\hat{U}_{n}(x) \leq k$ for all $n \geq N$ and $x \in \mathbb{R}.$ 
We denote by $\hat{u}_{n}(G,x) $ the value function for $\hat U_n$.  
The choice of  $\alpha_{n}$ and $\beta_{n}$ implies that $0 \in {\cal A}(\hat{U}_n,0,x_0)$ and that\begin{align}
\label{enfin}
\hat{u}_{n}(0,x_0) \geq \inf_{P \in \mathcal{Q}^{T}} E_{P} \hat{U}_{n}(x_0) = 0.
\end{align}
We treat first the case  $\pi(G)=+\infty$. By definition  for all $z \in \mathbb{R}$, $n \geq 1$, $\emptyset=\mathcal{A}(G,z)=\mathcal{A}(\hat{U}_{n},G,z)$ and $\hat{u}_{n}(G,x_0+z)=-\infty$ (see  \eqref{theeq}).
 So, \eqref{prixut} and \eqref{enfin} show that $\hat{p}_{n}(G,x_0)=+\infty$ for all $n \geq 1$. The claim is proved.\\
Assume now that $\pi(G)<\infty$. 
Proposition \ref{inf} 
implies that $\hat{p}_{n}(G,x_0) \leq \pi(G)<\infty$ and $\lim_{n \to +\infty} \hat{p}_{n}(G,x_0)=\pi(G)$ will hold if $\liminf_{n} \hat{p}_{n}(G,x_0) \geq \pi(G)$. Assume that this is not the case. Hence there is subsequence $(n_{k})_{k \geq 1}$ and some $\eta >0$ such that
$\hat{p}_{n_{k}}(G,x_0) \leq  \pi(G)-\eta$ for all $k \geq 1$. Since $x_0>0$, we may and will assume that $ \eta < x_0$.
By definition of $\hat{p}_{n_{k}}(G,x_0)$ we have that $$\hat{u}_{n_{k}}(G,x_0 + \pi(G)-\eta) \geq \hat{u}_{n_{k}}(0,x_0).$$
If 
 $\lim_{k \to +\infty} \hat{u}_{n_{k}}(G,x_0 + \pi(G)-\eta) = -\infty$ is proved,  {$\liminf_{k \to +\infty} \hat u_{n_{k}}(0,x_0)=-\infty$} follows and  contradicts \eqref{enfin}.
So, it remains to prove  that $\lim_{k \to +\infty} \hat{u}_{n_{k}}(G,y) = -\infty$ with $y:=x_0 + \pi(G)-\eta \in (\pi(G),x_0 + \pi(G))$.\\
First we show that  $x_0+ G \notin \mathcal{C}^{T}_{y}$.  Indeed if this is not the case,  there exists some $X \in \mathcal{W}_{T}^{0,+}$ and $\phi \in \Phi$ such that $x_0+G=V_{T}^{y,\phi}-X$ $\mathcal{Q}^{T}$-q.s. Hence $G \leq V_{T}^{y-x_0,\phi}$ $\mathcal{Q}^{T}$-q.s. and $y-x_0 \geq \pi(G)$ follows: A contradiction.
Applying  Lemma \ref{L3},
we get some $\varepsilon>0$ such that $\inf_{\phi \in \Phi} \sup_{P \in \mathcal{Q}^{T}} P (A_{\phi}) > \varepsilon$, where $A_{\phi}:=\{V_{T}^{y,\phi} < x_0+ G-\varepsilon\}.$
Note that we can always assume that $x_0 > \varepsilon$.
Hence for all $\phi \in \Phi$, there exists some $P_{\varepsilon,\phi} \in \mathcal{Q}^{T}$ such that $P_{\varepsilon,\phi} (A_{\phi}) > \varepsilon$.\\
Proposition \ref{pivsp}
and Theorem \ref{BN2} imply that  for all $ n\geq 1$,  $\mathcal{A}(\hat U_{n},G,y)=\mathcal{A}(G,y) \neq \emptyset$ since $y > \pi(G)$. Choose some $\phi \in \mathcal{A}(G,y)$.  \\ 
We first postulate  the assumptions of Theorem \ref{t2}.  
Using the monotonicity of $\hat{U}_{n}$, recalling $G(\cdot) \geq  -b$ $\mathcal{Q}^{T}$-q.s.,  \eqref{unstar} and Lemma \ref{L10}, we get for all $n \geq 1$ that
\begin{align}
\nonumber
 E_{P_{\varepsilon,\phi}} 1_{\Omega^{T}\backslash{A_{\phi}}} \hat U_{n}(V_{T}^{y,\phi}(\cdot)- G(\cdot)) &\leq  E_{P_{\varepsilon,\phi}} \hat U^{+}_{n}(V_{T}^{y+b,\phi}(\cdot))
 \leq  \hat U_n^{+}(x_0) +\sup_{P \in \mathcal{Q}^{T}} E_{P} \left(\left|V_{T}^{y+b,\phi}(\cdot)\right|\right) \hat U_n'(x_0)\\
 \nonumber
& \leq   (|y|+b)\sup_{P \in \mathcal{Q}^{T}} E_{P} \left(M_T(\cdot)\right)\\
\label{Step1}
& \leq   (x_0+ |\pi(G)|+b)||M_{T}||_1=:K<\infty.
\end{align}
Under the assumptions of Theorem \ref{t2bis} item 1, the last boundness  from above property in \eqref{Step1}   is still valid for  $K=k$ and  $n\geq N$. 
Now, as $\hat{U}_{n}(x_0-\varepsilon) \leq \hat{U}_{n}(x_0)=0$ we get that
\begin{align}
\label{Step2}
E_{P_{\varepsilon,\phi}} 1_{A_{\phi}}\hat{U}_{n} \left(V_{T}^{y,\phi}(\cdot)-G(\cdot)\right)  \leq \hat{U}_{n}(x_0-\varepsilon) E_{P_{\varepsilon,\phi}} 1_{A_{\phi}} \leq \varepsilon \hat{U}_{n}(x_0-\varepsilon).
\end{align}
So  \eqref{Step1} and \eqref{Step2} imply that    for all $n \geq N$ 
$$ \inf_{P \in \mathcal{Q}^{T}} E_{P}\hat{U}_{n} \left(V_{T}^{y,\phi}(\cdot)-G(\cdot)\right) \leq E_{P_{\varepsilon,\phi}}\hat{U}_{n} \left(V_{T}^{y,\phi}(\cdot)-G(\cdot)\right) \leq  K +\varepsilon \hat{U}_{n}(x_0-\varepsilon).$$
As this is true for all $\phi \in \mathcal{A}(G,y)=\mathcal{A}(\hat{U}_{n},G,y)$,  $\hat{u}_{n}(y,G) \leq  K +\varepsilon \hat{U}_{n}(x_0-\varepsilon)$ follows  for all $n\geq N$. Finally,
  \citep[Lemma 4]{CR06} (which uses the concavity of $\hat{U}_{n}$)  implies that  $\lim_{n \to +\infty} \hat{U}_{n}(x_0-\varepsilon)=-\infty$ and thus $\lim_{n \to +\infty} \hat{u}_{n}(G,y)=-\infty$ as claimed.\\
{We now prove the second item of Theorem \ref{t2bis}. The proof is similar to the one of  the first item of Theorem \ref{t2bis} and we only enlighten the main changes.  First, we do not modify the functions $U_n$. Let $k_1 >0$ be such that $U_n(x) \leq k_1$ for all $n \geq N$ and $x \in \mathbb{R}$. 
As $0 \in {\cal A}(U_n,0,x_0)$ for all $n \geq N$ 
\begin{eqnarray*}
u_{n}(0,x_0) \geq  U_n(x_0)>\inf_{n\geq N}U_n(x_0)  >-\infty,
\end{eqnarray*}
which is the pendant of \eqref{enfin}. 
 If $\pi(G)=+\infty$ the same arguments as above apply. Assume that $\pi(G)<+\infty$: Proposition \ref{inf} still applies and $\hat{p}_{n}(G,x_0) \leq \pi(G)$.  Fix $\phi \in \mathcal{A}(G,y)$ and let $y$, $A_{\phi}$ and $P_{\varepsilon,\phi}$ be as before: We prove directly that $\lim_{n \to +\infty} u_{n}(G,y)=-\infty$. 
Fix some $J>0$ and $C_J := \frac{1}{\e}\left(J+k_1\right)$. 
As $\lim_{n\to + \infty}U_n(x_0 -\e)=-\infty$, there exists 
$N_{J}\geq N$ such that for all $n \geq N_{J}$, $U_n(x_0 -\e)\leq -C_J$. 
\begin{align*}
 E_{P_{\varepsilon,\phi}} U_{n} \left(V_{T}^{y,\phi}(\cdot)-G(\cdot)\right) & \leq  
 E_{P_{\varepsilon,\phi}} 1_{\Omega^{T}\backslash{A_{\phi}}} U_{n}(V_{T}^{y,\phi}(\cdot)- G(\cdot)) + 
 E_{P_{\varepsilon,\phi}} 1_{A_{\phi}}U_{n} \left(V_{T}^{y,\phi}(\cdot)-G(\cdot)\right) \\
& \leq k_1 +U_n(x_0 -\e) P_{\varepsilon,\phi}(A_{\phi}) \leq k_1 + U_n(x_0 -\e)\varepsilon \leq -J.
\end{align*}
Thus  as  $N_{J}$ does not depend on $\phi$,  we obtain that for all  $n \geq N_{J}$, $u_{n}(y,G) \leq -J$.}
\end{proof}\\

\subsection{Proofs of Propositions \ref{certainequivdef} and \ref{rrandce}}
\label{appendix2}

\begin{proof}[\textit{of Proposition \ref{certainequivdef}}]
 Fix some
$P \in \mathcal{Q}^{T}$ and $G \in \mathcal{W}_{T}^{+}(U)$.  
As $E_{P} U^{-}(G(\cdot))<+\infty$, $G \in \mathcal{W}_{T}^{0,+}$ and $U$ is non-decreasing,
$E_{P} U(G(\cdot)) -U(0) \geq 0$ (note that $U(0)$ may be equal to $-\infty$). 
As  $P(G(\cdot)<\infty)=1$ and $U$ is  strictly increasing $U(G(\cdot))< \lim_{y \to +\infty} U(y)$ $P$-a.s. Together with  $E_{P} U^{+}(G(\cdot))<+\infty$, one concludes that
$E_P U(G(\cdot))-  U(y)<0$ for $y$ large enough and the intermediate value theorem implies that \eqref{certaintyRnonra}  holds true.\\
Now  $E_{P} U(G(\cdot)) \geq U(0)$ for all $P \in \mathcal{Q}^{T}$ implies that
$\inf_{P \in \mathcal{Q}^{T}}E_{P} U(G(\cdot)) -U(0) \geq 0$ (recall that $\sup_{P \in \mathcal{Q}^{T}}E_{P} U^-(G(\cdot))<\infty$). Moreover for some $P \in \mathcal{Q}^{T}$ as
$\inf_{P \in \mathcal{Q}^{T}}E_P U(G(\cdot))-  U(y) \leq E_P U(G(\cdot))-  U(y)<0$ for $y$ large enough, the intermediate value theorem implies  \eqref{certaintyrobRnonra}. \\
Now for any $Q \in \mathcal{Q}^{T}$, \eqref{certaintyRnonra}  implies that
\begin{align*}
 U(\inf_{P \in \mathcal{Q}^{T}} e(G,P)) \leq  U(e(G,Q))= E_{Q} U(G(\cdot)).
\end{align*}
This is true for any $Q \in \mathcal{Q}^{T}$, so  \eqref{certaintyrobRnonra}  implies that
\begin{align*}
 U(\inf_{P \in \mathcal{Q}^{T}} e(G,P)) \leq \inf_{P \in \mathcal{Q}^{T}} E_{P} U(G(\cdot)) =U({e}(G))
\end{align*}
and  $\inf_{P \in \mathcal{Q}^{T}} e(G,P) \leq {e}(G)$ follows from strict monotonicity of $U$. 
Now for any $Q \in \mathcal{Q}^{T}$, \eqref{certaintyRnonra}, \eqref{certaintyrobRnonra} and Jensen's inequality imply that
\begin{align*}
U({e}(G))= \inf_{P \in \mathcal{Q}^{T}} E_{P} U(G(\cdot))   \leq E_{Q} U( G(\cdot))
 =U(e(G,Q))\leq U \left(E_{Q} G(\cdot)\right). \end{align*}
Thus, by strict monotonicity of $U$, ${e}(G) \leq e(G,Q) \leq E_{Q} G(\cdot)$ and since this is true for all $Q \in \mathcal{Q}^{T}$, we find that  $${e}(G) \leq \inf_{P \in \mathcal{Q}^{T}}e(G,P) \leq \inf_{P \in \mathcal{Q}^{T}} E_{P} G(\cdot).$$
\end{proof}\\

\begin{proof}[\textit{of Proposition \ref{rrandce}}]
We adapt the  proof  of \citep[Proposition 2.47]{fs} to the multiple-priors framework.\\
$1.$ We first show that if for all $x > 0$, $r_{A}(x) \geq r_{B}(x)$, then
 $e_{A}(G,P) \leq {e}_{B}(G,P)$
 for all $G  \in \mathcal{W}^{+}_{T}(U_{A,B})$ and 
 $P \in \mathcal{Q}^{T}$.  This will imply that $e_{A}(G) \leq {e}_{B}(G)$ using Proposition \ref{certainequivdef}. 
Fix some   $G  \in \mathcal{W}^{+}_{T}(U_{A,B})$  and  $P \in \mathcal{Q}^{T}$. Let $D:=U_{B}((0,\infty)) \subset (-\infty,\infty)$ and define $F: D \to \mathbb{R}$  by $F(y)= U_{A}\left(U^{-1}_{B}(y)\right)$. Then on $D$ \begin{align}
 \label{fsec}
F'(\cdot)=\frac{U'_{A}(U_{B}^{-1}(\cdot))}{U^{'}_{B}(U_{B}^{-1}(\cdot))} \;  \mbox{and} \; F''(\cdot)= \frac{ U'_{A}(U^{-1}_{B}(\cdot))}{\left(U'_{B}(U^{-1}_{B}(\cdot))\right)^{2}} \left( r_{B}(U_{B}^{-1}(\cdot))-r_{A}(U_{B}^{-1}(\cdot))\right).
 \end{align}
As $U^{-1}_{B}(\cdot)>0$ on $D$,  $F$ is  increasing and concave on $D$ and  $U_{A}(x)=F (U_{B}(x))$ for all $x >0$.  Now let  $\d:=U_{B}(0) \in [-\infty,\infty)$ be  the lower bound of  $D$. We distinguish between two cases. First if $\d>-\infty$, we  extend $F$ by continuity in $\d$, setting $F(\d)= U_{A}\left(U^{-1}_{B}(\d)\right)=U_{A}(0) \in [-\infty,\infty)$. It is clear that $F(\d) \leq F(y)$ for all $y \in [\d,+\infty)$, that $F$ is concave on $[\d,+\infty)$  and that  $U_{A}(x)=F (U_{B}(x))$ holds also true for all $x \geq 0$.   Now, using 
\eqref{certaintyRnonra} and  Jensen's inequality, we get that \begin{small} \begin{align}
\label{UAFUB}
U_{A}({e}_{A}(G,P))=  E_{P} U_{A} (G(\cdot))= E_{P} F\left(U_{B}(G(\cdot))\right)
 &\leq F\left( E_{P} \left(U_{B}(G(\cdot))\right)\right)
 =F(U_{B}({e}_{B}(G,P)))=U_{A}({e}_{B}(G,P)).
 \end{align}\end{small}
Since $U_{A}$ is strictly increasing, we obtain that ${e}_{A}(G,P) \leq {e}_{B}(G,P)$ as claimed.\\
Now we treat the case where $\d=-\infty$. First $P(G>0)=1$. Indeed if $P(G=0)>0$,   $E_{P} U^{-}_{B} (G(\cdot))= E_{P} U^{-}_{B}(G(\cdot))1_{\{G>0\}}(\cdot)+ U^{-}_B(0) P(G=0)=+\infty$, a case that we have excluded. Thus
$P(G>0)=1$. Moreover $e_A(G,P)$ and ${e}_{B}(G,P)$ are positive. Else $U_{A}({e}_{A}(G,P))=-\infty$ while 
$E_P U_A(G(\cdot)) \geq -E_P U^-_A(G(\cdot))>-\infty$. Thus
the previous arguments apply and  we also obtain ${e}_{A}(G,P) \leq {e}_{B}(G,P)$.  \\
\noindent $2.$ Assume that $e_{A}(G) < e_{B}(G)$ for all $G \in \mathcal{W}_{T}^{+}(U_{A,B})$ and there exists some $x_0>0$ such that $r_{A}(x_0) < r_{B}(x_0)$. By continuity, there exists $\alpha>0$, such that $r_{A}(x) < r_{B}(x)$ on $(x_0-\alpha,x_0+\alpha)$. We can choose $\alpha$ such that $x_0-\alpha>0$. Let $I:=\left(U_{B}(x_0-\alpha), U_{B}(x_0+\alpha)\right) \subset D$, then $F$ is strictly convex on $I$ (see \eqref{fsec}). Fix $\widetilde{G} \in\mathcal{W}^{+}_{T}(U_{A,B})$   and set $G:= x_0-\alpha + 2 \alpha \frac{\widetilde{G}}{\widetilde{G}+1} \in  \mathcal{W}^{+}_{T}(U_{A,B})$. It is clear that  $G(\cdot)   \in (x_0-\alpha,x_0+\alpha)$. 
As in \eqref{UAFUB}, using Jensen inequality, the fact that  $F$ is (strictly) convex on $I$ we get that for any $P\in
\mathcal{Q}^{T}$
\begin{align}
\label{starUU}
U_{A}({e}_{A}(G,P)) =E_{P} F\left(U_{B}(G(\cdot)\right)) \geq F(E_{P} (U_{B}(G(\cdot)))=U_{A}(e_{B}(G,P)).
\end{align}
This implies that $e_{A}(G,P) \geq e_{B}(G,P)$ for all $P \in \mathcal{Q}^{T}$, thus $e_{A}(G)\geq e_{B}(G)$: A contradiction. Note that if $P$ is such that one can find some  $\widetilde{G}$ which is not constant then the  inequality in \eqref{starUU} is strict and one gets that $e_{A}(G,P)>e_{B}(G,P)$.
\end{proof}
\section{Extension to  random utility functions}
\label{apen}
Random utility functions capture very general situations where the preferences of the agent depend not only on her wealth but also on the path. At the starting date,  the agent might not know  exactly how her utility function will depend on her wealth. Moreover the shape of her utility function vary with the context  and can be updated as information is released. For instance, she could become more risk averse if the market exhibits a tendency to move lower and could take more risk in the opposite situation. Such behaviors are often  observed in financial markets.  An example of state-dependent utility function is the forward investment performance process introduced in \citep{Mu05}, see also \citep{KOZ17} for an extension to the multiple-priors framework.
\begin{definition}
\label{utilitydefdifR}
A random utility function  $U:\Omega^T \times (0,\infty) \rightarrow \mathbb{R}\cup \{- \infty\}$  satisfies the following  conditions
	\begin{itemize}
		\item[i)]
		for every $x>0$, $U \left(\cdot,x\right):~\Omega^{T}\rightarrow\mathbb{R}$ is universally-measurable,
		\item[ii)]
		{for all $\omega^{T} \in \Omega^{T}$,  $U \left(\omega^{T},\cdot\right):~(0,\infty)  \rightarrow\mathbb{R}$ is  non decreasing on $(0,\infty)$ and such $U \left(\omega^{T},x_{\o}\right)>-\infty$ for some $x_{\o}>0$.}
\end{itemize}
We extend $U$ by (right) continuity in $0$ and set  $U(\cdot,x)=-\infty$ if $x<0$.
\end{definition}
The next example  exhibits random utility functions  such that in addition 
$U_n \left(\omega^{T},\cdot\right)$  is  concave, strictly increasing and twice continuously differentiable on $(0,\infty)$.
\begin{example}
\label{randomuex}
Assume that the agent analyzes her gain or loss with respect to a (random) reference point $B$ rather than with respect to zero as suggested for instance by  \citep{kt}. Let $\overline{U}$ be a non-random {concave} function satisfying Assumption  \ref{utilitydefdif} and $B \in \mathcal{W}_{T}^{\infty,+}$ and set for all $\o^T \in \O^{T}$, $x \geq 0$, $U(\o^{T},x)=\overline{U}(x+||B||_{\infty}-B(\o^{T}))$ and $U(\o^{T},x)=-\infty$ for $x<0$.  \\
The second example  proposes to consider random absolute risk aversion.
The idea is to use classical utility functions but with random coefficients. For example, we may consider $U(\o^{T},x)= x^{\beta_{1}(\o^{T})}$ or $U(\o^{T},x)= -e^{-\beta_{2}(\o^{T})x}$ for $x \geq 0$ (and $U(\cdot,x)=-\infty$ for $x<0$) where $\beta_{1},\beta_{2} \in  \mathcal{W}_{T}^{0}$ and $0<\beta_{1}(\cdot) <1$, $\beta_{2}(\cdot) >0$ $\mathcal{Q}^{T}$-q.s.  We can imagine various situations for $\beta_2$  (which can be easily adapted for $\beta_{1}$):  The law of  $\beta_{2}$ under $P$ can be uniformly distributed on  $[\beta^P_{min},\beta^P_{max}]$ for all $P \in \mathcal{Q}^{T}$ (with $\beta^P_{max} \geq \beta^P_{min} >0$), alternatively it could follow a Poisson law  of parameter $\l_P>0$ for all $P \in \mathcal{Q}^{T}$. It could also be a function of some market parameters to model situations where the agent updates her utility function depending on market conditions. \end{example}

Let $(U_n)_{n \geq 1}$ be a sequence of utility functions satisfying Definition \ref{utilitydefdifR}. 
The following definition is  \eqref{theeq}    
 adapted  for general random utility functions.
 \begin{eqnarray*}
u_{n}(G,x)&:=& \sup_{\phi \in \mathcal{A}(U_{n},G,x)} \inf_{P \in \mathcal{Q}^{T}} E_{P} U_{n} \left(\cdot, V_{T}^{x,\phi}(\cdot)-G(\cdot)\right)\\
 \end{eqnarray*}
The generalization of Theorem \ref{t2}  for random utility functions will be stated for some fixed $x_0>0$ and requires  some further assumptions. The first one replaces the convergence of the absolute risk aversion to infinity:   $U_{n}(\cdot,x)$ goes to $-\infty$ with respect to $\inf_{P \in \mathcal{Q}^{T}} P$ for all $0 < x <x_0$.  
This is explained  in Lemma \ref{L4}, where we also give alternative conditions to \eqref{toprove}. {It also requires some uniform boundness from below assumption in $x_0$}.
\begin{assumption}
\label{un}
We have that  {$\sup_{n} ||U^{-}_{n}(\cdot,x_0)||_{1}<\infty$} and that for all $0 < x <x_0$ and $M > 0$,
\begin{align}
\label{toprove}
\lim_{n \to +\infty} \inf_{P \in \mathcal{Q}^{T}} P\left(U_{n}(\cdot,x) \leq -M\right)=1.
\end{align}
\end{assumption}
The second assumption allows to fix integrability issues for unbounded from above utility functions and states that the $U_{n}$ are sufficiently measurable and regular. 
\begin{assumption}
\label{u1}
{For all $\omega^{T} \in \Omega^{T}$,  $U \left(\omega^{T},\cdot\right)$ is   concave  and  twice continuously differentiable on $(0,\infty)$.} There exist some $x_{1} \geq x_0$ and some $q>1$ such that 
$$\sup_{n} ||U^{+}_{n}(\cdot,x_1)||_{1}<\infty \mbox{ and }  \sup_{n} ||U'_{n}(\cdot,x_1)||_{q}<\infty.$$ 
\end{assumption}
\begin{theorem}
\label{t1}
Assume that Assumptions \ref{Qanalytic}, \ref{Sass} and \ref{NAQT} holds true. Let $(U_n)_{n \geq 1}$ be a sequence of  random utility functions satisfying Definition \ref{utilitydefdifR} and let $G \in  \mathcal{W}^{0,bo}_{T}$. 
Assume that Assumption \ref{un} holds true for some $x_0>0$.  
{Suppose that either there exist some $N>1$ and $B  \in \mathcal{W}_{T}^{1,+}$ such that 
$U_n(\cdot,x) \leq B(\cdot)$ $\mathcal{Q}^{T}$-q.s. for all $x>0$ and $n \geq N$ or that  
 Assumptions \ref{SalphaI}  and \ref{u1} hold true}. 
Then
$\lim_{n \to +\infty} p_{n}(G,x_0)=\pi(G)$.
\end{theorem}
\begin{example}
\label{examplerandon}
{We give a concrete example for Theorem \ref{t1}.   For all $n \geq 1$, let $R_{n}$  be a random variable  uniformly distributed in $[b_{n},B_{n}]$ for all $P \in \mathcal{Q}^{T}$  with $b_{n}>0$, $\lim_{n \to +\infty} b_{n}=+\infty$. 
Set  for all $\omega^{T} \in \Omega^{T}$ $U_{n}(\o^{T},x)=-e^{-R_{n}(\o^{T})(x-1)}$ for $x \geq 0$ and $U_{n}(\o^{T},x)=-\infty$ for $x<0$.  
We choose $x_0=1$. Then  for all $M>0$ and $0<x<1$, $U_n(\cdot,x)\leq -M$ if and only if $R_n (\cdot)\geq \frac{\ln M}{1-x}$. As $\lim_{n \to +\infty} b_{n}=+\infty$, Assumption \ref{un}  is verified. }
\end{example}
\begin{proof}
We  outline briefly how the proof of Theorem \ref{t2} (and Theorem \ref{t2bis}) is modified.  The structure of the proof is similar to the one of item 2 of Theorem \ref{t2bis}, in particular we do not modify the functions $U_n$. \\
Under the assumptions of Theorem \ref{t1}, Proposition \ref{pivsp} (and thus Proposition \ref{inf}) is still valid replacing \eqref{Kx} by   \eqref{KxR} below.  \\
Assume first that Assumptions \ref{SalphaI}  and \ref{u1} hold true. Using \eqref{unstar} for $U_n(\o^{T},\cdot)$ (recall $x_1>0$), we get that  for all $x>0$, $\o^{T} \in \O^{T}$, for all  $P \in \mathcal{Q}^{T}$
\begin{align}
\nonumber
 E_{P} U_n^{+}(.,  V_{T}^{x,\phi}(.))  \leq & \sup_{P \in \mathcal{Q}^{T}} E_{P} U_n^{+}(\cdot,x_1) + \sup_{P \in \mathcal{Q}^{T}} E_{P} \left(\left|V_{T}^{x,\phi}(\cdot)\right|  U_n'(\cdot,x_1)\right)\\
 \nonumber
\leq & ||U_n^{+}(\cdot,x_1)||_{1}+ |x| \, ||M_{T}(\cdot) U_n'(\cdot,x_1)||_{1}\\
\nonumber
 \leq &  ||U_n^{+}(\cdot,x_1)||_{1}+ |x| \, ||M_{T}(\cdot)||_{p}  ||U_n'(\cdot,x_1)||_{q} \\
 \label{KxR}
 \leq & \sup_{n} ||U_{n}^{+}(\cdot,x_1)||_{1}+|x| \, ||M_{T}(\cdot)||_{p} \sup_{n}||U_{n}'(\cdot,x_1)||_{q}=:
 \overline{K}(x_1, x)<\infty,
\end{align}
where  Lemma \ref{L10}, $M_{T} \in \mathcal{W}^{p}_{T}$  (where $p$ verifies $\frac{1}{p}+\frac{1}{q}=1$),    Assumption \ref{u1} and \citep[Proposition 16]{DHP11} have been used. \\
{Now, if Assumptions \ref{SalphaI}  and \ref{u1} do not hold true but there exists some $B  \in \mathcal{W}_{T}^{1,+}$ such that 
$U_n(\cdot,x) \leq B(\cdot)$ for all $x>0$ and $n \geq N$, the last boundness property in \eqref{KxR} is still valid for  $\overline{K}(x_1, x)=||B(\cdot)||_{1}$.}

Remark now that $0 \in {\cal A}(U_n,0,x_0)$ for all $n \geq N$ (recall that $x_0>0$ and \eqref{KxR}). Now Assumption \ref{un} implies that for all $n \geq N$
\begin{eqnarray*}
u_{n}(0,x_0) \geq \inf_{P \in \mathcal{Q}^{T}} E_{P} U_{n}(\cdot,x_0) \geq -\sup_{n} ||U^{-}_{n}(\cdot,x_0)||_{1} >-\infty.
\end{eqnarray*}
Let $G \in  \mathcal{W}^{bo}_{T}$  and $b \geq 0$ such that $G \geq -b $ $\mathcal{Q}^{T}$-q.s. Let $\phi \in \mathcal{A}(G,y)$ and let $y$, $A_{\phi}$ and $P_{\varepsilon,\phi}$ be as in the proof of Theorems \ref{t2} and \ref{t2bis}.
Instead of  \eqref{Step1} (recall \eqref{KxR}), we use that 
\begin{align}
\label{Step1bis}
 E_{P_{\varepsilon,\phi}} 1_{\Omega^{T}\backslash{A_{\phi}}} U_{n}(\cdot,V_{T}^{y,\phi}(\cdot)- G(\cdot))  \leq \overline{K}(x_1,x_0+ |\pi(G)|+b).
\end{align}
The arguments to obtain \eqref{Step2} are more involved. First, fix some $J>0$ and set 
$$C_J := \frac{2}{\e}\left(J+\overline{K}(x_1,x_0)+ \overline{K}(x_1,x_0+ |\pi(G)|+b)\right) \mbox{ and } B_{J,n}:=\{U_{n}(\cdot,x_0-\varepsilon) \leq -C_{J}\}.$$  We  apply Assumption \ref{un}  (recall that $x_0 > \e$) and obtain some $N_{J}\geq {N}$ (which does not depend on $\phi$) such that for all $n \geq N_{J}$,
$$P_{\varepsilon,\phi}\left(B_{J,n}\right) \geq \inf_{P \in \mathcal{Q}^{T}} P\left(B_{J,n}\right) > 1-\frac{\varepsilon}{2}.$$
Then, for all $n \geq N_{J}$, $P_{\varepsilon,\phi}\left(B_{J,n}\cap A_{\phi}\right) >  \frac{\varepsilon}{2}$ (recall that $P_{\varepsilon,\phi} (A_{\phi}) > \varepsilon$) and  we get that
\begin{align*}
 E_{P_{\varepsilon,\phi}} 1_{A_{\phi}}U_{n} \left(\cdot,V_{T}^{y,\phi}(\cdot)-G(\cdot)\right) & \leq  E_{P_{\varepsilon,\phi}} 1_{A_{\phi} \cap B_{J,n}} U_{n} (\cdot,x_0-\varepsilon ) + E_{P_{\varepsilon,\phi}} 1_{A_{\phi} \backslash{B_{J,n}}}  U_{n} (\cdot,x_0 )\\
&\leq \frac{-\varepsilon C_{J} }{2} + \overline{K}(x_1,x_0)= -J - \overline{K}(x_1,x_0+ |\pi(G)|+b),
\end{align*}
using \eqref{KxR} and the definition of $C_{J}$.
 Combining the previous equation  with   \eqref{Step1bis},  we obtain that  for all $n \geq N_{J}$  
$$ \inf_{P \in \mathcal{Q}^{T}} E_{P}U_{n} \left(\cdot,V_{T}^{y,\phi}(\cdot)-G(\cdot)\right) \leq E_{P_{\varepsilon,\phi}}U_{n} \left(\cdot,V_{T}^{y,\phi}(\cdot)-G(\cdot)\right) \leq -J.$$
 As  $N_{J}$ does not depend on $\phi$,  we obtain that for all  $n \geq N_{J}$, $u_{n}(y,G) \leq -J$. Since  this is  true for all $J \geq 0$, $\lim_{n \to +\infty} u_{n}(G,y)=-\infty$ and the proof is complete.
\end{proof}\\

{We now make the link between \eqref{toprove} and the convergence of the absolute risk aversion. From now we take a sequence of 
utility functions  satisfying Assumption \ref{utilitydefdifR} and such that for all $\omega^{T} \in \Omega^{T}$,  $U_n \left(\omega^{T},\cdot\right)$ is   concave  and  twice continuously differentiable on $(0,\infty)$. The generalisation of the absolute risk aversion (see 
  \eqref{absorisk}) is}
$$
r_{n}(\o^{T},x):=-\frac{U_{n}^{''}(\o^{T},x)}{U_{n}^{'}(\o^{T},x)}.$$
 
\begin{lemma}
\label{L4}
Assume that  {$\sup_n \|U_{n}(\cdot,x_0)\|_1 <+ \infty$}  and that
there exists some $N\geq 1$, a  strictly positive random variable $\l$  and  some deterministic functions  $(\rho_{n})_{n \geq 1}$ such that for all  $n\geq N$, $U'_{n}(\cdot,x_0) \geq \l(\cdot)$, $r_{n}(\cdot,x) \geq \rho_{n}(x)$ and $\lim_{n \to +\infty} \rho_{n}(x)=+\infty$ for all   $ x \in (0,x_{0}]$.  Then  \eqref{toprove} holds true.
\end{lemma}
\begin{proof}
Suppose  for  all $\e>0$ such that $x_0>\e$ and all $C\geq 0$, we have that
\begin{align}
\label{mochemoche}
\lim_{n \to +\infty} \inf_{P \in \mathcal{Q}^{T}} P\left( \left\{\int_{x_0-\frac{\e}2}^{x_0}U_n''(\cdot,v)dv <-\frac{C}{\e}\right\}\right)=1.
\end{align}
First we prove that \eqref{toprove}  holds true. Fix some $\e>0$ such that $x_0>\e$ and $M > 0$. For all $\o^{T} \in \O^{T}$  $$U_{n}(\o^{T},x_0-\e) = U_{n}(\o^{T},x_0)- \int_{x_0-\e}^{x_0} U'_{n}(\o^{T},u)du.$$ Using that $U'_{n}(\o^{T},\cdot)$ is non-negative and non increasing,  we obtain   that
\begin{align*}
U_{n}(\o^{T},x_0-\e) + \frac{\e}2  U'_{n}\left(\o^{T},x_0-\frac{\e}2\right) & \leq
U_{n}(\o^{T},x_0-\e) + \int_{x_0-\e}^{x_0-\frac{\e}2}U'_{n}(\o^{T},v)dv \leq U_{n}(\o^{T},x_0).
\end{align*}
Now \begin{align*}
U'_{n}\left(\o^{T},x_0-\frac{\e}2\right)&=U'_{n}(\o^{T},x_0)-\int_{x_0-\frac{\e}2}^{x_0} U''_{n}(\o^{T},v)dv
\geq -\int_{x_0-\frac{\e}2}^{x_0} U''_{n}(\o^{T},v)dv
\end{align*}
and all together
\begin{align*}
U_{n}(\o^{T},x_0-\e) & \leq |U_{n}(\o^{T},x_0)|+ \frac{\e}2  \int_{x_0-\frac{\e}2}^{x_0} U''_{n}(\o^{T},v)dv.
\end{align*}
We fix some $\eta>0$ and  show  that there exists some $N_{\eta}>0$ such that  for all $n$
$$\inf_{P \in \mathcal{Q}^{T}} P\left(|U_{n}(\cdot,x_0)| \leq N_{\eta}\right)>1-\frac{\eta}2.$$
{As $\sup_n \|U_{n}(\cdot,x_0)\|_1 <+ \infty,$} \citep[Lemma 13]{DHP11} implies that for all $k \geq 1$ 
$$
\sup_{P \in \mathcal{Q}^{T}}
P\left(|U_{n}(\cdot,x_0)|>  k \right) \leq
\frac1{k}
\sup_{P \in \mathcal{Q}^{T}}E_P\left(|U_{n}(\cdot,x_0)|\right)\leq \frac1{k}
\sup_n \|U_{n}(\cdot,x_0)\|_1.$$
Thus there exists $N_{\eta}>0$ such that
$\sup_{P \in \mathcal{Q}^{T}} P\left(|U_{n}(\cdot,x_0)|>  N_{\eta}\right)<\frac{\eta}2$ for all $n$.  From
\eqref{mochemoche} with  $C=2(N_{\eta}+M)$, there exists $N=N(\eta,M,\e)$ such that
for all $n \geq N$,
\begin{align*}
\inf_{P \in \mathcal{Q}^{T}} &P( U_{n}(\cdot,x_0-\e)  \leq -M) \\
 &\geq \inf_{P \in \mathcal{Q}^{T}}P\left(\{ |U_{n}(\cdot,x_0)| \leq N_{\eta} \} \cap \left\{ \int_{x_0-\frac{\e}2}^{x_0} U''_{n}(\cdot,v)dv <-\frac{2(N_{\eta}+M)}{\e}\right\}\right) \\
 & \geq \inf_{P \in \mathcal{Q}^{T}}P\left(\{ |U_{n}(\cdot,x_0)| \leq N_{\eta} \}\right) + \inf_{P \in \mathcal{Q}^{T}} P\left(\left\{ \int_{x_0-\frac{\e}2}^{x_0} U''_{n}(\cdot,v)dv <-\frac{2(N_{\eta}+M)}{\e}\right\}\right)-1 >1-\eta.
\end{align*}
Thus, \eqref{toprove}  is proved for all $x=x_0- \e>0$. \\
We are left with the proof of  \eqref{mochemoche}. Going back to the assumption of the lemma,  there exists some $N \geq 1$  and a  strictly positive random variable $\l$   such that   $U'_{n}(\cdot,x_0) \geq \l(\cdot)$ for all  $n \geq N$. So  we get that
$$\int_{x_0-\frac{\e}2}^{x_0}U_n''(\cdot,v)dv =-\int_{x_0-\frac{\e}2}^{x_0} U_n'(\cdot,v)r_n(\cdot,v) dv \leq -\l(\cdot)\int_{x_0-\frac{\e}2}^{x_0} r_n(\cdot,v)dv. $$
Thus to prove that  \eqref{mochemoche} holds true, it is enough to show that \begin{align}
\label{ugly}
\lim_{n \to +\infty} \inf_{P \in \mathcal{Q}^{T}}
P\left(\l(\cdot) \int_{x_0 -\frac{\e}2}^{x_0} r_{n}(\cdot,v)dv > \frac{C}{\e}\right)=1.
\end{align}
As for all $ x \in (0,x_{0}]$ $\lim_{n} \rho_{n}(x)=+\infty$, we get that  $\lim_{n \to +\infty} \int_{x_0-\frac{\e}2}^{x_0} \rho_{n}(v)dv=+\infty$ by Fatou's Lemma. 
Now \eqref{ugly} holds true as 
$r_n(\cdot,x) \geq \rho_{n}(x)$, for  $n \geq N$ and $x \in (0,x_{0}]$. 
\end{proof}\\

\begin{remark}
\label{remition}
$1.$ We indicate  why in our proof we cannot use directly the assumption that $\lim_{n \to +\infty} r_{n}(\cdot,x_0)=+\infty$ instead of Assumption \ref{un}. Indeed $\lim_{n \to +\infty} r_{n}(\o^{T},x)=+\infty$ for all $x \in (0,x_0]$, $\o^{T} \in \O^{T}$ together with Fatou's Lemma imply that for all $\o^T \in \Omega^T$, there exists $N_{\o^T}$ such that for all $k \geq N_{\o^T}$,
$\l(\o^{T}) \int_{x_0 -\frac{\e}2}^{x_0} r_{n}(\o^{T},v)dv > \frac{C}{\e}$, which means that
$$\Omega^T=\cup_n \cap_{k \geq n} \left\{\l(\cdot)  \int_{x_0 -\frac{\e}2}^{x_0} r_{k}(\cdot,v)dv>\frac{C}{\e}\right\}$$
and using \citep[Theorem 1]{DHP11} this implies that
\begin{align*}
\lim_{n \to +\infty} \sup_{P \in \mathcal{Q}^{T}} P\left(\l(\cdot) \int_{x_0 -\frac{\e}2}^{x_0} r_{n}(\cdot,v)dv > \frac{C}{\e}\right)=1.
\end{align*}
But this does not imply that \eqref{ugly} holds true, hence we cannot conclude as in the proof of Lemma \ref{L4}  that \eqref{toprove} holds true.\\
$2.$ In the course of the proof of Lemma \ref{L4}  we saw that {if $\sup_n \|U_{n}(\cdot,x_0)\|_1 <+ \infty$}, \eqref{toprove}  is satisfied under other sets of assumptions.
\begin{enumerate}
\item Of course \eqref{mochemoche} implies that \eqref{toprove} holds true.
\item
If  there exists some $N \geq 1$ and a  strictly positive random variable $\l$   such that   $U'_{n}(\cdot,x_0) \geq \l(\cdot)$ for all $n \geq N$, then \eqref{ugly} also implies \eqref{toprove}.
\item If furthermore
 $U''_n(\o^T,\cdot)$ is non decreasing for all $n \geq N$ and $\o^T \in \Omega^T$, then
\begin{align}
\label{bi1}
\lim_{n \to +\infty} \inf_{P \in \mathcal{Q}^{T}}
P\left( \left\{\l(\cdot) r_n(\cdot,x_0)>\frac{2C}{\e^2}\right\}\right)=1.
\end{align}
implies that  \eqref{toprove} holds true.
\end{enumerate}
Indeed for the last assertion, since  for all $n \geq N$ and $\o^T \in \Omega^T$, $U''_n(\o^T,\cdot)$ is non decreasing and $U'_{n}(\o^{T},x_0) \geq \l(\o^{T})$, we get that
$$\int_{x_0-\frac{\e}2}^{x_0}U_n''(\cdot,v)dv \leq \frac{\e}2U_n''(\cdot,x_0)=-\frac{\e}2U_n'(\cdot,x_0)r_n(\cdot,x_0) \leq -\frac{\e}2 \l(\cdot)r_n(\cdot,x_0). $$
Thus \eqref{bi1} implies \eqref{mochemoche} and \eqref{toprove} holds true. Note that power utility functions or exponential utility functions (with random coefficients, see Example \ref{randomuex} for the precise conditions) are examples where $U''_n(\o^T,\cdot)$ is non decreasing for all $n$ and $\o^T \in \Omega^T$.
\end{remark}

\begin{remark}
We revisit briefly the notion of certainty equivalent (see Proposition \ref{certainequivdef})  but for random utility functions in both the uni and multiple-priors framework. Let $G \in \mathcal{W}_{T}^{0}$ such that  $0 \leq G(\cdot) <+\infty$  $\mathcal{Q}^{T}$-q.s. and assume that $U$ is  a utility function verifying Definition \ref{utilitydefdifR} {and such that for all $\omega^{T} \in \Omega^{T}$,  $U \left(\omega^{T},\cdot\right)$ is   concave  and  twice continuously differentiable on $(0,\infty)$. Moreover suppose that}  
$\sup_{P \in \mathcal{Q}^{T}} E_{P} U^{-}(\cdot,y)<+\infty$ for all $y>0$, $E_{P} U^{+}(\cdot,1)<+\infty$  and {$E_{P}| U(\cdot,G(\cdot))|<+\infty$} for all $P \in \mathcal{Q}^{T}$. Then
\begin{itemize}
\item For all $P \in \mathcal{Q}^{T}$, there exists a {unique} constant $e(G, P) \in [0,+\infty)$ such that
\begin{align*}
E_{P} U(\cdot,e(G,P))&=  E_{P} U(\cdot,G(\cdot)).
\end{align*}
\item If furthermore  $G \in \mathcal{W}_{T}^{\infty,+}$, $\sup_{P \in \mathcal{Q}^{T}} E_{P} U^{-}(\cdot,G(\cdot))<\infty$
and  $\inf_{P \in \mathcal{Q}^{T}}E_P U'(\cdot,z)>0$ for all $z>0$,  then there  also exists  a unique
${e}(G) \in [0,||G||_{\infty})$ such that
\begin{align*}
\inf_{P \in \mathcal{Q}^{T}} E_PU(\cdot,{e}(G)) &=  \inf_{P \in \mathcal{Q}^{T}} E_{P} U(\cdot,G(\cdot))
\end{align*}
and in this case, we have that  ${e}(G) \geq \inf_{P \in \mathcal{Q}^{T}} e(G,P)$. We call ${e}(G)$ the multiple-priors  certainty equivalent of $G$.
\end{itemize}
As for Proposition \ref{certainequivdef}, the (omitted) proof relies on a careful application of the intermediate value theorem.
\end{remark}

\bibliography{biblioRomain19}

\end{document}